\documentclass[journal]{IEEEtran}

\IEEEoverridecommandlockouts

\usepackage{color}
\usepackage{subfigure}
\usepackage{graphics} 
\usepackage{epsfig} 
\usepackage{mathptmx} 

\usepackage{amsmath} 
\usepackage{amssymb}  
\usepackage{psfrag}
\usepackage{bm}
\usepackage{amsbsy}
\newtheorem{thm}{Theorem}

\newtheorem{lemma}[thm]{Lemma}

\hyphenation{op-tical net-works semi-conduc-tor IEEEtran}

\title{ Optimal Delay-Throughput Trade-offs in Mobile Ad-Hoc Networks:\\
Hybrid Random Walk and One-Dimensional Mobility Models\thanks{An earlier version of this paper
appeared in the Proc. of ITA Workshop, 2007.}\thanks{Research presented here was supported in part
by a Vodafone Fellowship and NSF grant CNS 05-19691.}}

\author{Lei Ying and R. Srikant\\
Coordinated Science Lab\\ and\\ Department of Electrical and Computer Engineering\\
University of Illinois at Urbana-Champaign\\
\{lying, rsrikant\}@uiuc.edu}

\begin{document}
\maketitle
\begin{abstract}
Optimal delay-throughput trade-offs for two-dimensional i.i.d mobility models have been established
in \cite{YinYanSri_06}, where we showed that the optimal trade-offs can be achieved using rate-less
codes when the required delay guarantees are sufficient large. In this paper, we extend the results
to other mobility models including two-dimensional hybrid random walk model, one-dimensional i.i.d.
mobility model and one-dimensional hybrid random walk model. We consider both fast mobiles and slow
mobiles, and establish the optimal delay-throughput trade-offs under some conditions. Joint
coding-scheduling algorithms are also proposed to achieve the optimal trade-offs.
\end{abstract}

\section{Notations}
The following notations are used throughout this paper, given non-negative functions $f(n)$ and
$g(n)$:
\begin{enumerate} \item[(1)]
$f(n)=O(g(n))$ means there exist positive constants $c$ and $m$ such that $f(n) \leq cg(n)$ for all
$ n\geq m$.

\item[(2)] $f(n)=\Omega(g(n))$ means there exist positive constants $c$ and $m$ such that $f(n)\geq
cg(n)$ for all $n\geq m.$ Namely, $g(n)=O(f(n))$.

\item[(3)] $f(n)=\Theta(g(n))$ means  that both $f(n)=\Omega(g(n))$ and $f(n)=O(g(n))$ hold.

\item[(4)]$f(n)=o(g(n))$ means that $\lim_{n\rightarrow \infty} f(n)/g(n)=0.$

\item[(5)] $f(n)=\omega(g(n))$ means that $\lim_{n\rightarrow \infty} g(n)/f(n)=0.$ Namely,
$g(n)=o(f(n)).$

\end{enumerate}

\section{Introduction}
Delay-throughput trade-offs in mobile ad-hoc networks have received much attention since the work
of Grossglauser and Tse \cite{GroTse_01}, where they showed that the throughput of ad-hoc networks
can be significantly improved by exploring the node mobility. Recently the trade-off was
investigated under different mobility models, which include the i.i.d. mobility \cite{NeeMod_05,
TouGol_04, LinShr_04, YinYanSri_06}, one-dimensional mobility \cite{DigGroTse_02,
GamMamParSha_06_2}, random walk \cite{GamMamPraSha_04, GamMamParSha_06, GamMamParSha_06_1,
ShaMazShr_06}, hybrid random walk \cite{ShaMazShr_06} and Brownian motion \cite{LinShaMazShr_06}.

In \cite{YinYanSri_06}, we demonstrated that the optimal trade-offs for two-dimensional i.i.d.
mobility models can be achieved using rate-less codes when the required delay guarantees are
sufficiently large. In this paper, we extend the results to the two-dimensional hybrid random walk,
one-dimensional i.i.d. mobility and one-dimensional hybrid random walk models. The two-dimensional
i.i.d. mobility studied in \cite{YinYanSri_06} only models the case where the network topology
changes dramatically at each time slot. However Markovian mobility dynamics may be more realistic.
Thus the two-dimensional hybrid random walk model was introduced by Sharma \emph{et al} in
\cite{ShaMazShr_06}, where the unit square is divided into $1/S^2$ small-squares, and mobiles move
from the current small-square to one of its eight adjacent small-squares at the beginning of each
time slot (The detailed description of the two-dimensional hybrid random walk model is presented in
Section \ref{sec: model}). Since the distance each mobile can move is at most $2\sqrt{2}/S$ at each
time slot, we can use different values of $S$ to model mobiles with different speeds, so this
two-dimensional hybrid random walk model can be used for a wide range of scenarios. Note that the
two-dimensional hybrid random walk model is the same as the two-dimensional i.i.d. mobility model
when $S=1.$ One might wonder why the results in \cite{YinYanSri_06} are necessary given the results
in this paper. The reason is that the Markovian mobility dynamics in this paper requires a
different set of tools than those in \cite{YinYanSri_06} and as a result, the trade-off in this
paper are applicable only when $S=o(1).$ Thus, the results in \cite{YinYanSri_06} cannot be
recovered from the results of this paper. We wish to comment that one of the main differences
between this paper and \cite{YinYanSri_06} is that, the i.i.d. mobility assumption in
\cite{YinYanSri_06} allows us to use Chernoff bounds to obtain concentration results. However, the
random walk and other mobility models in this paper require the use of martingale inequalities to
establish the travel patterns of the mobiles.

In this paper, we will also study one-dimensional mobility models. These models are motivated by
certain types of delay-tolerant networks \cite{War_03}, in which a satellite sub-network is used to
connect local wireless networks outside of the Internet. Since the satellites move in fixed orbits,
they can be modelled as one-dimensional mobilities on a two-dimensional plane. Motivated by such a
delay-tolerant network, we consider one-dimensional mobility model where $n$ nodes move
horizontally and the other $n$ node move vertically. Since the node mobility is restricted to one
dimension, sources have more information about the positions of destinations compared with the
two-dimensional mobility models. We will see that the throughput is improved in this case; for
example, under the one-dimensional i.i.d. mobility model with fast mobiles, the trade-off will be
shown to be $\Theta(\sqrt[3]{D^2/n}),$ which is better than $\Theta(\sqrt{D/n}),$ the trade-off
under the two-dimensional i.i.d. mobility model with fast mobiles. We also propose joint
coding-scheduling algorithms which achieve the optimal trade-offs.

Three mobility models are included in this paper, and each model will be investigated under both
the fast-mobility and slow-mobility assumptions. The detailed analysis of the two-dimensional
hybrid random walk model and one-dimensional i.i.d. mobility model will be presented. The results
of the one-dimensional hybrid random walk model can be obtained using the techniques used in the
other two models, so the analysis is omitted in this paper for brevity. Our main results include
the followings:
\begin{enumerate}
\item[(1)] Two-dimensional hybrid random walk model:
\begin{enumerate}
\item[(i)] Under the fast mobility assumption, it is shown that the maximum throughput per S-D pair
is $O(\sqrt{D/n})$ when $S=o(1)$ and $D=\omega(|\log S|/S^2),$ and Joint Coding-Scheduling
Algorithm I \cite{YinYanSri_06} can achieve the maximum throughput when $S=o(1)$ and $D$ is both
$\omega(\max\{(\log^2 n)|\log S|/S^6, \sqrt[3]{n}\log n\})$ and $o(n/\log^2 n).$

\item[(ii)] Under the slow mobility assumption, it is shown that the maximum throughput per S-D
pair is $O(\sqrt[3]{D/n})$ when $S=o(1)$ and $D=\omega(|\log S|/S^2),$ and Joint Coding-Scheduling
Algorithm II can achieve the maximum throughput when $S=o(1)$ and $D$ is both $\omega((\log^2
n)|\log S|/S^6)$ and $o(n/\log^2 n).$
\end{enumerate}

\item[(2)] One-dimensional i.i.d. mobility model:

\begin{enumerate}
\item[(i)] Under the fast mobility assumption, it is shown that the maximum throughput per S-D is
$O\left(\sqrt[3]{D^2/n}\right)$ given delay constraint $D.$ Then Joint Coding-Scheduling Algorithm
III is proposed to achieve the maximum throughput when $D$ is both $\omega(\sqrt[5]{n})$ and
$o\left(\sqrt{n}/\sqrt[3/2]{\log n}\right).$

\item[(ii)] Under the slow mobility assumption, it is shown that the maximum throughput per S-D
pair is $O\left(\sqrt[4]{D^2/n}\right).$ Joint Coding-Scheduling Algorithm IV is proposed to
achieve the maximum throughput when $D$ is $o\left(\sqrt{n}/\log^2 n\right).$
\end{enumerate}

\item[(3)] One-dimensional hybrid random walk model:

\begin{enumerate}
\item[(i)] Under the fast mobility assumption, it is shown that the maximum throughput per S-D pair
is $O(\sqrt[3]{D^2/n})$ when $S=o(1)$ and $D=\omega(1/S^2),$ and Joint Coding-Scheduling Algorithm
III can achieve the maximum throughput when $S=o(1)$ and $D$ is both $\omega(\max\{(\log^2 n)|\log
S|/S^4, \sqrt[5]{n}\log n\})$ and $o(\sqrt{n}/\sqrt[3/2]{\log n}).$

\item[(ii)] Under the slow mobility assumption, it is shown that the maximum throughput per S-D
pair is $O(\sqrt[4]{D^2/n})$ when $S=o(1)$ and $D=\omega(1/S^2),$ and Joint Coding-Scheduling
Algorithm IV can achieve the maximum throughput when $S=o(1)$ and $D$ is both $\omega((\log^2
n)|\log S|/S^4)$ and $o(\sqrt{n}/\log^2 n).$
\end{enumerate}

\end{enumerate}
Note that the optimal delay-throughput trade-off are established under some conditions on $D.$ When
these conditions are not met, the trade-off is still unknown in general, though a trade-off of the
two-dimensional hybrid random walk model with slow mobiles has been established under an assumption
regarding packet replication in \cite{ShaMazShr_06}. We also would like to mention that when the
step size of the two-dimensional hybrid random walk is $1/\sqrt{n},$ our two-dimensional hybrid
random walk model is identical to the random walk model studied in \cite{GamMamParSha_06,
GamMamParSha_06_1}, where the optimal delay-throughput trade-off has been obtained. Our results do
not apply to this case since the set of allowed values for $D$ becomes empty in that case (see (1)
(i) above).

The remainder of the paper is organized as follows: In Section \ref{sec: model}, we introduce the
communication and mobility model. Then we analyze the two-dimensional hybrid random walk models in
Section \ref{sec: RW}, and one-dimensional i.i.d. mobility models in Section \ref{sec: OD_STSM}.
The results of one-dimensional hybrid random walk model are presented in Section \ref{sec: ORW}.
Finally, the conclusions is given in Section \ref{sec: conl}.

\section{Model}
\label{sec: model} In this section, we first present the models that we use for mobility and
wireless interference. Then the definitions of delay and throughput are provided.

\noindent{\bf Mobile Ad-Hoc Network Model:} Consider an ad-hoc network where wireless mobile nodes
are positioned in a unit square. Assume that the time is slotted, we study following three mobility
models in this paper.
\begin{enumerate}
\item[(1)]{\bf Two-Dimensional Random Walk Model:} Consider a unit square which is further divided
into $1/S^2$ squares of equal size. Each of the smaller square will be called an RW-cell (random
walk cell), and indexed by $(U^x, U^y)$ where $U^x, U^y\in \{1, \ldots, 1/S\}.$ The unit square is
assumed to be a torus, i.e., the top and bottom edges are assumed to touch each other and similarly
the left and right edges also are assumed to touch other. A node which is in one RW-cell at a time
slot moves to one of its eight adjacent RW-cells or stays in the same RW-cell in the next time-slot
with each move being equally likely as in Figure \ref{fig: 2DRW}. Two RW-cells are said to be
adjacent if they share a common point. The node position within the RW-cell is randomly uniformly
selected. There are $n$ S-D pairs in the network. Each node is both a source and a destination.
Without loss of generality, we assume that the destination of node $i$ is node $i+1,$ and the
destination of node $n$ is node $1.$
\begin{figure}[hbt]
\centering{\epsfig{file=./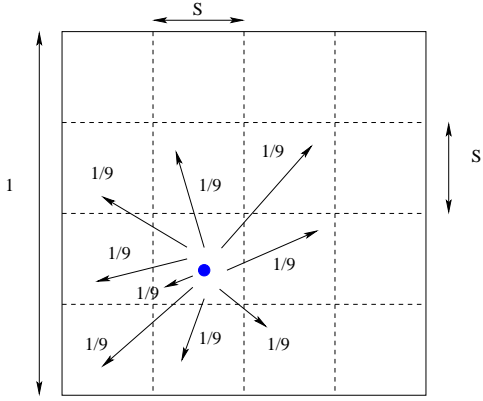,width=2.5in}} \caption{Two-Dimensional Random Walk Model}
\label{fig: 2DRW}
\end{figure}

\item[(2)]{\bf One-Dimensional I.I.D. Mobility Model:} Our one-dimensional i.i.d. mobility model is
defined as follows:
\begin{enumerate}
\item[(i)] There are $2n$ nodes in the network. Among them, $n$ nodes, named H-nodes,  move
horizontally; and the other $n$ nodes, named V-nodes, move vertically.

\item[(ii)] Using $(x_i, y_i)$ to denote the position of node $i.$ If node $i$ is an H-node, $y_i$
is fixed and $x_i$ is a value randomly uniformly chosen from $[0, 1].$ We also assume that H-nodes
are evenly distributed vertically, so $y_i$ takes values $1/n, 2/n, \ldots, 1.$ V-nodes have
similar properties.

\item[(iii)] Assume that source and destination are the same type of nodes. Also assume that node
$i$ is an H-node if $i$ is odd, and a V-node if $i$ is even. Further, assume that the destination
of node $i$ is node $i+2,$ the destination of node $2n-1$ is node $1,$ and the destination of node
$2n$ is node $2.$

\item[(iv)] The orbit distance of two H(V)-nodes is defined to be the vertical (horizontal)
distance of the two nodes.
\end{enumerate}

\item[(3)]{\bf One-Dimensional Random Walk Model:} Each orbit is divided into $1/S$ RW-intervals
(random walk interval). At each time slot, a node moves into one of two adjacent RW-intervals or
stays at the current RW-interval (see Figure \ref{fig: 1DRW}). The node position in the RW-interval
is randomly, uniformly selected.
\begin{figure}[hbt]
\centering{\epsfig{file=./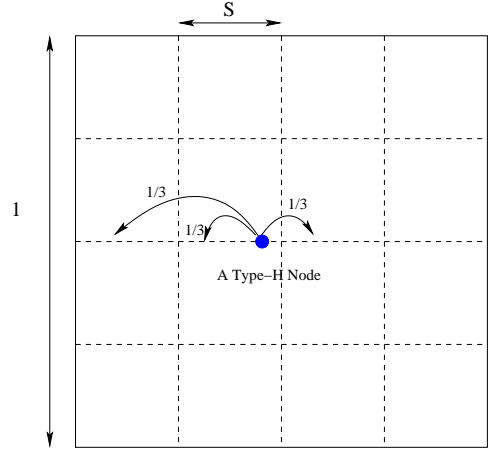,width=2.5in}} \caption{One-Dimensional Random Walk Model}
\label{fig: 1DRW}
\end{figure}

\end{enumerate}

\noindent{\bf Communication Model:} We assume the protocol model introduced in \cite{gupkum00} in
this paper. Let $\hbox{dist}(i,j)$ denote the Euclidean distance between node $i$ and node $j,$ and
$r_i$ to denote the transmission radius of node $i.$ A transmission from node $i$ can be
successfully received at node $j$ if and only if following two conditions hold:
\begin{enumerate}
\item[(i)] $\hbox{dist}(i,j)\leq r_i;$

\item[(ii)] $\hbox{dist}(k,j)\geq (1+\Delta) \hbox{dist}(i,j)$ for each node $k\not=i$ which
transmits at the same time, where $\Delta$ is a protocol-specified guard-zone to prevent
interference.
\end{enumerate}
We further assume that at each time slot, at most $W$ bits can be transmitted in a successful
transmission.

\noindent{\bf Time-Scale of Mobility:}  Two time-scales of mobility are considered in this paper.
\begin{enumerate}
\item[(1)] Fast mobility: The mobility of nodes is at the same time-scale as the data transmission,
so $W$ is a constant independent of $n$ and only one-hop transmissions are feasible in single time
slot.

\item[(2)] Slow mobility: The mobility of nodes is much slower than the wireless transmission, so
$W\gg n.$ Under this assumption, the packet size can be scaled as $W/H(n)$ for $H(n)=O(n)$ to
guarantee $H(n)$-hop transmissions are feasible in single time slot.

\end{enumerate}

\noindent{\bf Delay and Throughput:} We consider hard delay constraints in this paper. Given a
delay constraint $D,$ a packet is said to be successfully delivered if the destination obtains the
packet within $D$ time slots after it is sent out from the source.

Let $\Lambda_i[T]$ denote the number of bits successfully delivered to the destination of node $i$
in time interval $[0, T].$ A throughput of $\lambda$ per S-D pair is said to be feasible under the
delay constraint $D$ and loss probability constraint $\epsilon>0$ if there exists $n_0$ such that
for any $n\geq n_0,$ there exists a coding/routing/scheduling algorithm with the property that each
bit transmitted by a source is received at its destination with probability at least $1-\epsilon,$
and
\begin{equation} \lim_{T\rightarrow \infty}\Pr\left( \displaystyle
\frac{\Lambda_i[T]}{T}\geq \lambda, \hbox{ }\forall \hbox{ } i\right)=1. \label{eq: throughput}
\end{equation}

\section{Two-Dimensional Hybrid Random Walk Models}
\label{sec: RW} The optimal delay-throughput trade-offs of the two-dimensional i.i.d. mobility
model with fast mobiles and slow mobiles have been established in \cite{YinYanSri_06}. In this
section, we first first extend the results to two-dimensional hybrid random walk models. We will
obtain the maximum throughput for $D=\omega\left(|\log S|/S^2\right),$ and then show that the
maximum throughput can be achieved using the algorithms proposed in \cite{YinYanSri_06} under some
additional constraints on $D.$
\subsection{Upper Bound}
The upper bound is established under the following assumptions:

\noindent{\bf Assumption 1:} Packets destined for different nodes cannot be encoded together.

\noindent{\bf Assumption 2:} A new coded packet is generated right before the packet is sent out.
The node generating the coded packet does not store the packet in its buffer.

\noindent{\bf Assumption 3:} Once a node receives a packet (coded or uncoded), the packet is not
discarded by the node till its deadline expires.

Note that Assumption $1$ is the only significant restriction imposed on coding/routing/scheduling
schemes. Next we introduce following notations which will be used in our proof.
\begin{itemize}
\item $\Lambda[T]:$ $\Lambda[T]=\sum_{i=1}^n\Lambda_i[T].$

\item $b:$ Index of a bit stored in the network. Bit $b$ could be either a bit of a data packet or
a bit of a coded packet.

\item $d_b:$ The destination of bit $b.$

\item $c_b:$ The node storing bit $b.$

\item $t_b:$ The time slot at which bit $b$ is generated.

\item $\tilde{L}_{b}:$ The minimum distance between node $d_b$ and node $c_b$ from time slot $t_b$
to time slot $t_b+D-1,$ i.e.,
$$\tilde{L}_b=\min_{t_b\leq t\leq t_b+D-1} \hbox{dist}(d_b, c_b)(t).$$
\end{itemize}

\begin{thm} Consider the two-dimensional hybrid random walk model with step-size $S=o(1)$ and delay constraint
$D=\omega(|\log S|/S^2),$ and suppose that Assumption 1-3 hold. We have following results:
\begin{enumerate}
\item[(1)] For fast mobiles,
\begin{eqnarray}\frac{48\sqrt{2}WT}{\Delta\sqrt{\pi}}\sqrt{n}(\sqrt{D}+1)\geq E[\Lambda[T]].\label{eq:
fast}\end{eqnarray}

\item[(2)] For slow mobiles,
\begin{eqnarray}\frac{8\sqrt[3]{9}WT}{\sqrt[3/2]{\Delta\pi^2}}\sqrt[3/2]{n}(\sqrt[3]{D}+1)\geq E[\Lambda[T]].\label{eq: slow}\end{eqnarray}
\end{enumerate}
\label{thm: Upper_hrw}
\end{thm}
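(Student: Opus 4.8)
\emph{Proof outline.} The plan is to follow the template used for the two‑dimensional i.i.d.\ model in \cite{YinYanSri_06}, replacing the Chernoff concentration bounds (which relied on independence of the per‑slot positions) by martingale (Azuma‑type) inequalities tailored to the random walk. The first ingredient is a \emph{spatial‑capacity bound}. Fix a time slot and consider all successful transmissions in it, with hop lengths $r_1,r_2,\dots$; under the protocol model the disks of radius $\Delta r_k/4$ centred at the receivers are pairwise disjoint and lie in the unit torus, so $\sum_k r_k^2 \le c_0/\Delta^2$ for an absolute constant $c_0$. Since each successful transmission carries at most $W$ bits, summing over $[0,T]$ gives $\sum (\text{bits carried})\,r^2 \le c_0 WT/\Delta^2$. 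For slow mobiles I would apply the same estimate to each of the at most $H(n)$ hop‑rounds inside a slot, and combine it with the packet size $W/H(n)$; this is the step at which the fast and slow analyses diverge.

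Next I would set up a \emph{charging argument} based on Assumptions 1--3. Each successfully delivered data bit is charged to the chain of stored (coded or uncoded) bits $b$ lying on a route from its source to $d_b$. Assumption~1 guarantees that all bits on such a chain carry the same destination $d_b$; Assumption~2 makes the bookkeeping well defined (each reception is a fresh index $b$, and transmitters keep no copy); Assumption~3 guarantees that bit $b$ is present throughout $[t_b,t_b+D-1]$. The relaying along the chain, together with node mobility, must close the gap between $d_b$ and the holder $c_b$, whose least value over the bit's lifetime is $\tilde L_b$, and the power‑mean inequality bounds the sum of squared hop lengths on the chain below by $(\text{net distance})^2/(\text{number of hops})$. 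This yields, for fast mobiles, $E\!\left[\sum_{b\text{ delivered}}\tilde L_b^2\right]\le c_0 WT/\Delta^2$, and for slow mobiles the analogue with $\tilde L_b^2$ replaced by $\tilde L_b^2/h_b$, where $h_b$ is the number of hops charged to $b$, supplemented by a hop budget of the form $\sum (\text{bits})\,h \le c_1 nWT$, since only $O(nW/H(n))$ bits can be (re)transmitted per hop‑round.

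The third ingredient is a \emph{mobility lemma}: because $D=\omega(|\log S|/S^2)$ exceeds the mixing time of the simple random walk on the $(1/S)\times(1/S)$ torus, the relative displacement of $c_b$ and $d_b$ is, at each step of the window, spread out over the torus, and an Azuma inequality for the walk increments (and the coupled destination walk) gives $\Pr(\tilde L_b\le r)\le c_2(\sqrt D+1)^2 r^2$ uniformly in $b$. I would then combine the pieces by thresholding $\Lambda[T]$ at a level $\ell_*$ on $\tilde L_b$: the bits with $\tilde L_b>\ell_*$ number at most $\ell_*^{-2}\sum_{b\text{ del}}\tilde L_b^2$ (fast mobiles), while the bits with $\tilde L_b\le\ell_*$ number, in expectation, at most (total number of stored bits, $\le c_1 nWT$) times $c_2(\sqrt D+1)^2\ell_*^2$; optimizing over $\ell_*$ produces \eqref{eq: fast}. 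For slow mobiles the extra hop‑budget is folded in by Cauchy--Schwarz applied to $\sqrt{x_p/h_p}$ and $\sqrt{x_p h_p}$, which turns the two‑term optimization into a three‑term one and replaces the square root by a cube root, giving \eqref{eq: slow}. The hard part will be the mobility lemma: with random‑walk mobility the positions in consecutive slots are strongly dependent, so no Chernoff bound is available, and one must construct an appropriate martingale from the walk increments, verify bounded differences, apply Azuma, and use $D=\omega(|\log S|/S^2)$ to guarantee sufficient spreading; a secondary difficulty is making the charging argument fully rigorous in the presence of network coding, which is precisely the role of Assumptions 1--3.
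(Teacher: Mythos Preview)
Your global architecture---spatial-capacity bound, charging via Assumptions 1--3, a mobility lemma bounding $\Pr(\tilde L_b\le L)$, and a threshold optimization---is exactly the paper's template, and the fast/slow split you describe is correct. The gap is in the mobility lemma itself, which is indeed the new ingredient here.

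You propose to obtain $\Pr(\tilde L_b\le r)\le c_2(\sqrt D+1)^2 r^2$ by ``construct[ing] an appropriate martingale from the walk increments \dots\ apply Azuma''. This is the wrong tool: Azuma--Hoeffding is a \emph{concentration} inequality (it bounds $\Pr(|M_D-M_0|\ge t)$), not a small-ball estimate, and it gives no handle on $\Pr(\min_t\mathrm{dist}_t\le L)$. Moreover ``uniformly in $b$'' is delicate: if $c_b$ and $d_b$ happen to sit in the same RW-cell at $t_b$, then for each of the early slots the per-slot probability of being within $L$ is of order $L^2/S^2$, not $L^2$, so a naive union bound over $D$ slots is off by a factor $1/S^2$.

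The paper bypasses this entirely by exploiting the \emph{hybrid} structure of the model: the RW-cell process is Markovian, but the position \emph{within} the RW-cell is drawn i.i.d.\ uniform at every slot. Let $N_b^{\rm rw}$ be the number of slots (out of $D$) in which $c_b$ and $d_b$ lie in the same or adjacent RW-cell. Conditioning on $N_b^{\rm rw}=K$, the within-cell positions are fresh each slot, so
\[
\Pr(\tilde L_b\le L\mid N_b^{\rm rw}=K)\ \le\ 1-\Bigl(1-\tfrac{\pi L^2}{S^2}\Bigr)^{K},
\]
and Jensen's inequality reduces this to $1-(1-\pi L^2/S^2)^{E[N_b^{\rm rw}]}$. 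The remaining task is purely about the \emph{mean} occupation time of the relative random walk: since the return time to a fixed RW-cell is $1/S^2$ and the hitting time is $O(|\log S|/S^2)=o(D)$, one gets $E[N_b^{\rm rw}]\le \tfrac{99}{10}S^2 D$ uniformly in the starting cells, hence $\Pr(\tilde L_b\le L)\le 36\,L^2 D$. No martingale inequality is used in the upper bound at all; the Azuma arguments in this paper appear only in the \emph{achievability} proofs (Theorem~\ref{thm: lower_hrw} and Lemma~\ref{lem: RWM}'s concentration half), where one needs the number of co-visits to concentrate, not merely its mean. Once you replace your proposed Azuma step by this conditioning-plus-Jensen argument, the rest of your outline goes through as in \cite{YinYanSri_06}.
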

\begin{proof} Let $N^{\rm rw}_{b}$ denote the number of time slots, from
$t_{b}+1$ to $t_{b}+D,$ at which node $c_b$ and $d_b$ are in the same RW-cell or neighboring
RW-cells. Then for any $L\in[0, S/\sqrt{\pi}),$ we have
\begin{eqnarray*}
&&\Pr\left(\tilde{L}_b\leq L \right)\\
&=&\sum_{K=1}^{D} \Pr\left(\tilde{L}_b\leq L
|{N}^{\rm rw}_b=K \right)\Pr\left(N^{\rm rw}_b=K\right)\\
&\leq& \sum_{K=1}^{D} \left(1-\left(1-\frac{\pi L^2}{S^2}\right)^K\right)\Pr\left(N^{\rm
rw}_b=K\right)\\
&=& 1-E\left[\left(1-\frac{\pi L^2}{S^2}\right)^{N^{\rm rw}_b}\right]\\
&\leq & 1-\left(1-\frac{\pi L^2}{S^2}\right)^{E\left[N^{\rm rw}_b\right]},
\end{eqnarray*} where the first inequality follows from the fact that the node position within a
RW-cell is randomly uniformly selected, and the last inequality follows from the Jensen's
inequality.

Next we consider $E[N_b^{\rm tw}].$ Let $(U_i^x(t), U_i^y(t))$ denote the RW-cell in which node $
i$ is at time slot $t,$ and $(V^x_i(t), V^y_i(t))$ denote the displacement of node $i$ at time slot
$t,$ i.e.,
\begin{eqnarray*}
V^x_i(t)=\left\{%
\begin{array}{ll}
    1, & \hbox{w.p. } \frac{1}{3} \\
    0, & \hbox{w.p. } \frac{1}{3} \\
    -1, & \hbox{w.p. } \frac{1}{3} \\
\end{array}%
\right.\hbox{ and } V^y_i(t)=\left\{%
\begin{array}{ll}
    1, & \hbox{w.p. } \frac{1}{3} \\
    0, & \hbox{w.p. } \frac{1}{3} \\
    -1, & \hbox{w.p. } \frac{1}{3} \\
\end{array}%
\right..
\end{eqnarray*} It is easy to see that
\begin{eqnarray*}
U_i^x(t)&=&\left[\left(U_i^x(0)+\sum_{m=1}^{t-1} V^x_i(m)\right)\mod{\frac{1}{S}}\right]+1;\\
U_i^y(t)&=&\left[\left(U_i^y(0)+\sum_{m=1}^{t-1} V^y_i(m)\right)\mod{\frac{1}{S}}\right]+1.
\end{eqnarray*}
Further, let $(U_{i-j}^x(t), U_{i-j}^y(t))$ denote the relative position of node $i$ from node $j,$
i.e.,
\begin{eqnarray*}
U_{i-j}^x(t)&=&\left(U_{i-j}^x(0)+\sum_{m=1}^{t-1} \tilde{V}^x_{i-j}(m)\right)\mod{\frac{1}{S}};\\
U_{i-j}^y(t)&=&\left(U_{i-j}^y(0)+\sum_{m=1}^{t-1} \tilde{V}^y_{i-j}(m)\right)\mod{\frac{1}{S}},
\end{eqnarray*} where
\begin{eqnarray*}
\tilde{V}^x_{i-j}(t)={V}^x_{i}(t)-{V}^x_{j}(t)=\left\{%
\begin{array}{ll}
    2, & \hbox{w.p. } \frac{1}{9} \\
    1, & \hbox{w.p. } \frac{2}{9}\\
    0, & \hbox{w.p. } \frac{1}{3}\\
    -1, & \hbox{w.p. } \frac{2}{9}\\
    -2, & \hbox{w.p. } \frac{1}{9}\\
\end{array}%
\right.,
\end{eqnarray*} and
\begin{eqnarray*}
\tilde{V}^y_{i-j}(t)={V}^y_{i}(t)-{V}^y_{j}(t)=\left\{%
\begin{array}{ll}
    2, & \hbox{w.p. } \frac{1}{9} \\
    1, & \hbox{w.p. } \frac{2}{9}\\
    0, & \hbox{w.p. } \frac{1}{3}\\
    -1, & \hbox{w.p. } \frac{2}{9}\\
    -2, & \hbox{w.p. } \frac{1}{9}\\
\end{array}%
\right..
\end{eqnarray*} So $(U_{i-j}^x(t), U_{i-j}^y(t))$ is the consequence of random walk $(\tilde{V}^x_{i-j}(m),
\tilde{V}^y_{i-j}(m))$ with initial position $$(U_{i-j}^x(0), U_{i-j}^y(0))=(U_i^x(0)-U_j^x(0),
U_i^y(0)-U_j^y(0)).$$

Note that node $c_b$ and node $d_b$ are in the same RW-cell if $(U_{c_b-d_b}^x(t),
U_{c_b-d_b}^y(t))=(0,0),$ and in neighboring RW-cells if $(U_{c_b-d_b}^x(t),
U_{c_b-d_b}^y(t))\in\{(0,1), (1,0), (1,1), (0, 1/S-1), (1/S-1, 0), (1/S-1, 1/S-1) \}.$ Similar to
the argument in Lemma \ref{lem: RWM} provided in Appendix B, we can conclude that for
$D=\omega(|\log S|/S^2),$
\begin{eqnarray*} E\left[N^{\rm rw}_b\right]\leq \frac{99}{10}
S^2 D,
\end{eqnarray*} which implies that
\begin{eqnarray}
\Pr\left(\tilde{L}_b\leq L \right) &\leq & 1-\left(1-\frac{\pi
L^2}{S^2}\right)^{\frac{99}{10}S^2 D}\nonumber\\
&\leq & 36 L^2 D. \label{eq: hp}
\end{eqnarray}

Based on inequality (\ref{eq: hp}), the proof of inequality (\ref{eq: fast}) is similar to the
proof of Theorem 3 of \cite{YinYanSri_06}, and the proof of (\ref{eq: slow}) is similar to the
proof of Theorem 6 of \cite{YinYanSri_06}.
\end{proof}

\subsection{Joint Coding-Scheduling Algorithms}
From Theorem \ref{thm: Upper_hrw}, we can see that the optimal delay-throughput trade-offs of the
two-dimensional hybrid random walk models are similar to the ones of the two-dimensional i.i.d.
mobility models \cite{YinYanSri_06}. It motivates us to consider the algorithms proposed in
\cite{YinYanSri_06}. As in \cite{YinYanSri_06}, we define and categorize packets into four
different types.
\begin{itemize}
\item Data packets: There are the uncoded data packets that have to be transmitted by the sources
and received by the destinations.

\item Coded packets: Packets generated by Raptor codes. We let $(i, k)$ denote the $k^{\rm th}$
coded packet of node $i.$

\item Duplicate packets: Each coded packet could be broadcast to other nodes to generate multiple
copies, called duplicate packets. We let $(i,k,j)$ denote a copy of $(i,k)$ carried by node $j,$
and $(i,k, J)$ to denote the set of all copies of coded packet $(i,k).$

\item Deliverable packets: Duplicate packets that happen to be within distance $L$ from their
destinations.
\end{itemize}
We will show that the optimal trade-offs can be achieved using Joint Coding-Scheduling Algorithm I
and II presented in \cite{YinYanSri_06} with the following modifications:
\begin{enumerate}\item[(1)] For the fast mobility model, we use Joint Coding-Scheduling Algorithm I with the following modification:
$2D/(25M)$ data packets are coded into $D/M$ coded packets;

\item[(2)] For the slow mobility model, we use Joint Coding-Scheduling Algorithm II with the
following modifications: $D/7$ data packets are coded to $D$ coded packets.
\end{enumerate}
For the detail of the algorithms, please refer to \cite{YinYanSri_06}.

\begin{thm} Consider the two-dimensional hybrid random walk models.
\begin{enumerate}
\item[(1)] Fast mobility model:  Suppose that $S$ is $o(1),$ $D$ is both $\omega(\max\{\log^2
n|\log S|/(S^6), \sqrt[3]{n}\log n\})$ and $o(n/(\log^2 n)),$ and the delay constraint is $6D.$
Then under the fast mobility model, given any $\epsilon$ there exists $n_0$ such that for any
$n\geq n_0,$ every data packet sent out can be recovered at the destination with probability at
least $1-\epsilon,$ and
\begin{eqnarray}\lim_{T\rightarrow \infty} \Pr\left(\frac{\Lambda_i[T]}{T}\geq
\left(\frac{W}{2520}\right)\left(\sqrt{\frac{D}{n}}\right) \hbox{ } \forall i\right)=1.\label{eq:
th}
\end{eqnarray} by using the modified Joint Coding-Scheduling
Algorithm I.

\item[(2)] Slow mobility model: Suppose that $S$ is $o(1)$ and $D$ is both $\omega(\log^2 n|\log
S|/(S^6))$ and $o(n/(\log^3 n)),$ and the delay constraints is $16D.$ Then under the slow mobility
model, given any $\epsilon$ there exists $\tilde{n}_0$ such that for any $n\geq \tilde{n}_0,$ every
data packet sent out can be recovered at the destination with probability at least $1-\epsilon,$
and
\begin{eqnarray}\lim_{T\rightarrow \infty} \Pr\left(\frac{\Lambda_i[T]}{T}\geq
\left(\frac{W}{224\sqrt{2}c_s C}\right)\left(\sqrt[3]{\frac{D}{n}}\right) \hbox{ } \forall
i\right)=1.\label{eq: th2}
\end{eqnarray}
by using the modified Joint Coding-Scheduling Algorithm II.
\end{enumerate}
\label{thm: lower_hrw}
\end{thm}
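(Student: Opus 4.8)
The plan is to follow closely the structure of the achievability proofs for the two-dimensional i.i.d. mobility model (Theorems 4 and 5 of \cite{YinYanSri_06}), modifying only the estimates that depend on the mobility pattern. The overall architecture of the argument is: (a) a source with data to send first Raptor-codes a block of data packets into a larger number of coded packets; (b) each coded packet is broadcast to roughly $\Theta(\sqrt{nD})$ (fast) or $\Theta(\sqrt[3]{n^2D})$ (slow) relays during an initial duplication phase; (c) during a delivery phase, each relay that finds itself within distance $L$ of the destination (a \emph{deliverable} packet) tries to forward its copy; (d) a scheduling/TDMA argument over the interference graph shows that a constant fraction of deliverable packets actually get through per slot; (e) a counting/probabilistic argument shows that over the $D$-slot window enough distinct coded packets reach the destination that the Raptor decoder succeeds with probability $1-\epsilon$; and (f) throughput is read off as (packets decoded per block)$/$(block length). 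The only place where ``hybrid random walk'' differs from ``i.i.d.'' is in the probabilistic input to steps (c) and (e): namely the probability that a given relay-destination pair is within distance $L$ at some slot of the window, and the concentration of the number of such good slots.

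Concretely, first I would invoke inequality (\ref{eq: hp}) from Theorem \ref{thm: Upper_hrw}, $\Pr(\tilde L_b \le L) \le 36L^2 D$, together with its companion \emph{lower} bound of the same order (obtained by the same Lemma \ref{lem: RWM}-type estimate: $E[N^{\rm rw}_b] = \Theta(S^2 D)$ when $D = \omega(|\log S|/S^2)$, so that the random walk mixes and the relative position is essentially uniform over the $1/S^2$ RW-cells for a positive fraction of the window, giving $\Pr(\tilde L_b \le L) = \Theta(L^2 D)$ for $L$ in the relevant range). This two-sided estimate is exactly the analogue of the i.i.d. statement ``$\Pr(\tilde L_b \le L) = \Theta(L^2 D)$'' used in \cite{YinYanSri_06}, so once it is in hand the rest of the i.i.d. proof transfers almost verbatim. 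The condition $S = o(1)$ together with $D = \omega(\log^2 n\,|\log S|/S^6)$ is what makes the relevant transmission radius $L$ (which will be $\Theta(1/\sqrt{nD})$ up to logs) still much larger than the per-slot mixing scale, so that the martingale-based travel-pattern lemma applies uniformly; the upper bound $D = o(n/\log^2 n)$ guarantees that $L$ is small enough that a slot can support $\Theta(1/(L^2\log n)) = \Theta(nD/\log n) \gg$ (number of distinct destinations that need service per slot) simultaneous successful deliveries under the protocol model with guard zone $\Delta$.

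Next I would set up the coding parameters as in the statement: for the fast model, code $2D/(25M)$ data packets into $D/M$ coded packets (here $M$ is the number of relay copies per coded packet, chosen $\Theta(\sqrt{n/D})$ so that $MD/M = D$ total copies are spread and each destination sees $\Theta(1)$ deliverable copies per slot in expectation); for the slow model, code $D/7$ data packets into $D$ coded packets and use multi-hop ($H(n)$-hop) transmissions within a slot to bridge the larger distance $\Theta(\sqrt[3]{D/n^2})$, which is why the throughput gains the extra cube-root factor. Then the key probabilistic step is a Chernoff/second-moment argument: the number of coded packets whose copies become deliverable and actually reach $d_b$ within the deadline concentrates around its mean, and the mean is arranged (via the factor $2/25$, resp. $1/7$) to exceed the Raptor decoding threshold $(1+\delta)\times$(number of data packets) with room to spare, so a union bound over the $n$ destinations and over blocks still leaves failure probability below $\epsilon$. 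Finally, throughput per S-D pair is (data packets per block) $\times W /$ (block length in slots): with block length $6D$ (fast) or $16D$ (slow), this yields $\frac{W}{2520}\sqrt{D/n}$ and $\frac{W}{224\sqrt 2 c_s C}\sqrt[3]{D/n}$ respectively, and taking $T\to\infty$ with a renewal-reward / law-of-large-numbers argument over successive blocks gives the almost-sure statements (\ref{eq: th}) and (\ref{eq: th2}).

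The main obstacle, and the only genuinely new work relative to \cite{YinYanSri_06}, is establishing that the hybrid-random-walk relative position behaves like the i.i.d. one on the time scales that matter — i.e. proving the two-sided version of (\ref{eq: hp}) and, more delicately, that the number of ``good'' slots $N^{\rm rw}_b$ and the deliverable-packet counts have \emph{enough concentration} despite the temporal correlations of the walk. In the i.i.d. model independence across slots makes the Chernoff bound immediate; here one must replace it with a martingale/Azuma or a bounded-differences argument applied to the difference walk $(\tilde V^x_{i-j}, \tilde V^y_{i-j})$, exactly as foreshadowed in the introduction (``the random walk and other mobility models in this paper require the use of martingale inequalities''). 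This is where the hypotheses $D = \omega(\log^2 n\,|\log S|/S^6)$ is really needed: it forces the window to be long enough (many mixing times, each of order $\approx 1/S^2$, and enough of them that the deviation terms $\sqrt{D}$-type in Azuma are dominated) that the fluctuations in $N^{\rm rw}_b$ around $\Theta(S^2D)$ are a vanishing fraction, which in turn makes the per-block decoding-success probability tend to $1$. Once that lemma is carved out (I would state it as a separate lemma paralleling Lemma \ref{lem: RWM}), the remainder is a bookkeeping exercise identical in form to the cited i.i.d. proofs, and I would simply point to \cite{YinYanSri_06} for those steps rather than reproduce them.
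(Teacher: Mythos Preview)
Your proposal is correct and follows essentially the paper's approach: reuse the i.i.d.\ achievability architecture from \cite{YinYanSri_06}, replacing the independent-slot Chernoff bounds by martingale concentration for the Markovian walk, with Lemma~\ref{lem: RWM} supplying the two-sided $\Theta(S^2D)$ estimate on $E[N^{\rm rw}_b]$ and hence on $\Pr(\tilde L_b\le L)$. The one implementation point you should make explicit is how Azuma is actually applied to the delivery count $B_i$ (not just to $N^{\rm rw}_b$): the paper partitions the $5D$-slot receiving window into $\Theta(\log D)$ ``b-time-slots'' of length $3D/\log D$, splits each into thirds so the outer thirds serve as mixing-time buffers, conditions on node positions at the block boundaries $\tilde{\mathbf X}$ to render the middle-third delivery events independent across blocks, and then bounds the per-block increment of $B_i$ by $O(D/(M\log D))$ to feed into the bounded-differences inequality---this block-with-buffer device is precisely where the hypothesis $D=\omega(\log^2 n\,|\log S|/S^6)$ is consumed.
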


\begin{proof} Let $\tilde{A}$ denote the area of a cell,
and $\tilde{M}[t]$ to denote the number of nodes in the cell at time slot $t.$ A cell is said to be
a \emph{good cell} at time $t$ if
$$\frac{9}{10} \tilde{A} n+1\leq \tilde{M}[t]\leq \frac{11}{10}{\tilde{A}n}.$$

\noindent{\bf Proof of (1):}  We consider one super time slot which consists of $6D$ time slots,
and calculate the probability that the $2D/(25M)$ data packets from node $i$ are fully recovered at
the destination, where $M=\sqrt{n/D}$ is the mean number of nodes in each cell. The proof will show
the following events happen with high probability.

\noindent{\bf Node distribution:} All cells are good during the entire super-time-slot with high
probability. Letting $\cal G$ denote this event, we will show
\begin{eqnarray}\Pr\left({\cal G}\right)\geq 1-\frac{1}{n^2}. \label{eq: good}
\end{eqnarray}

\noindent{\bf Broadcasting:} At least $16D/(25M)$ coded packets from a source are successfully
duplicated after the broadcasting step with high probability, where a coded packet is said to be
successfully duplicated if the packet is in at least $4M/5$ distinct relay nodes. Letting $A_i$
denote the number of coded packets which are successfully duplicated in a super time slot, we will
first show that
\begin{eqnarray}\Pr\left(\left. A_i\geq \frac{16D}{25M}\right| {\cal G}\right)\geq
1-\frac{55D}{n}-e^{-\frac{D}{600M}}. \label{eq: A}
\end{eqnarray}

\noindent{\bf Receiving:} At least $3D/(25M)$ distinct coded packets from a source are delivered to
its destination after the receiving step with high probability. Letting $B_i$ denote the number of
distinct coded packets delivered to destination $i+1$ in a super time slot, we will show
\begin{eqnarray}
\Pr\left(\left.B_{i}\geq \frac{3}{25}\frac{D}{M}\right|  A_i\geq \frac{16}{25}
\frac{D}{M}\right)\geq 1-2e^{-\frac{\log D}{8500}}- e^{-\frac{D}{500M\log D}}. \label{eq: B}
\end{eqnarray}

From inequalities (\ref{eq: good}), (\ref{eq: A}) and (\ref{eq: B}), we can conclude that under the
Joint Coding-Scheduling Algorithm I, at each super time slot, the $2D/(25M)$ data packets can be
successfully recovered with probability at least
$$1-\frac{1}{n^2}-\frac{55D}{n}-e^{-\frac{D}{600M}}-2e^{-\frac{\log D}{8500}}- e^{-\frac{D}{500M\log D}}.$$ The
rest of the proof is the same as the proof of Theorem 4 in \cite{YinYanSri_06}.

{\bf Analysis of node distribution:} Since $D=o\left(n/\log^2 n\right)$ implies $M=\omega(\log n).$
Inequality (\ref{eq: good}) can be obtained from the Chernoff bound and union bound.

{\bf Analysis of broadcasting step: } Consider the broadcasting step. Note that when ${\cal G}$
occurs, node $i$ is selected to broadcast with probability at least $10/(11M)$ at each time slot.
Let ${\cal B}_i[t]$ denote the event that node $i$ is selected to broadcast in time slot $t.$ From
the Chernoff bound, we have
\begin{eqnarray}
\Pr\left(\left.\sum_{t=1}^D 1_{{\cal B}_{i}[t]}\geq \frac{9}{11}\frac{D}{M}\right| {\cal G}
\right)\geq 1-e^{-\frac{D}{600M}}. \label{eq: br}
\end{eqnarray}
So node $i$ broadcasts $9D/(11M)$ coded packets with a high probability. Each coded packet is
broadcast to $9M/10$ relay nodes.

According to Step (2)(ii) of Joint Coding-Scheduling Algorithm I \cite{YinYanSri_06}, each relay
node keeps at most one packet for each source. Consider duplicate packet $(i, k, j).$ It could be
dropped if node $j$ is in the same cell as node $i$ and node $i$ is selected to broadcast. Thus,
the probability that $(i, k, j)$ is dropped is at most
\begin{eqnarray}\frac{11}{10}\frac{DM}{n} \times \frac{10}{9}\frac{1}{M} =\frac{11}{9}\frac{D}{n}
\label{eq: p}
\end{eqnarray} due to the following two facts:
\begin{enumerate}
\item[(a)] Let ${\cal H}_{ji}[D]$ denote the event that node $j$ is in the same cell as node $i$ in
at least one of $D$ consecutive time slots. Similar to (\ref{eq: hp}), it can be shown that
\begin{eqnarray}
 \Pr\left({{\cal H}_{ji}[D]}\right)\leq \frac{11}{10}\frac{D M}{n}\label{eq: p1}
\end{eqnarray}
under the delay constraint given in the theorem.

\item[(b)] When ${\cal G}$ occurs, node $i$ is selected to broadcast with probability at most
$10/(9M)$ at each time slot.
\end{enumerate}

Now suppose source $i$ broadcasts $\tilde{D}_i$ coded packets, so $9M\tilde{D}_i/10$ duplicate
copies are generated.  Let $\tilde{N}^d_i$ denote the number of duplicate packets of node $i$
dropped in the broadcasting step. From the Markov inequality and inequality (\ref{eq: p}), we have
\begin{eqnarray*}
&&\Pr\left(\tilde{N}_i^d\geq \frac{M\tilde{D}_i}{50}\left|{\cal G}, \sum_{t=1}^D 1_{{\cal
B}_i[t]}=\tilde{D}_i\right.\right)\\
&\leq& \frac{E\left[\tilde{N}_i^d\geq \frac{M\tilde{D}_i}{50}\left|{\cal G}, \sum_{t=1}^D 1_{{\cal
B}_i[t]}=\tilde{D}_i\right.\right]}{\frac{M\tilde{D}_i}{50}}\\
&\leq&\frac{\frac{9M\tilde{D}_i}{10}\times
\frac{11D}{9n}}{\frac{M\tilde{D}_i}{50}}\\
&=&\frac{55D}{n},
\end{eqnarray*} which implies
\begin{eqnarray}\Pr\left(A_i\geq \frac{4}{5}\tilde{D}_i\left|{\cal G}, \sum_{t=1}^D 1_{{\cal B}_i[t]}=\tilde{D}_i\right.\right)\geq 1-\frac{55D}{n}\label{eq:
NC}\end{eqnarray} since otherwise, more than $M\tilde{D}_i/50$ duplicate packets would be dropped.
Inequality (\ref{eq: A}) follows from inequalities (\ref{eq: NC}) and (\ref{eq: br}).

{\bf Analysis of receiving step:} We group every $3D/\log D$ time slots into big time slots, named
as b-time-slot and indexed by $t_b,$ and then divided every b-time-slot into three equal parts,
indexed by $t_{b,1},$ $t_{b, 2}$ and $t_{b, 3}$ as in Figure \ref{fig: TD}.
\begin{figure}[hbt]\psfrag{t1}{$\scriptstyle t_{b,1}$}\psfrag{t2}{$\scriptstyle t_{b,2}$}\psfrag{t3}{$\scriptstyle t_{b,3}$}
\centering{\epsfig{file=./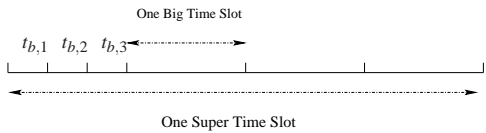,width=2.5in}} \caption{The Division of A Super Time Slot}
\label{fig: TD}
\end{figure}
We first calculate the probability that coded packet $(i, k)$ is delivered in $t_{b, 2}.$  Let
${\cal H}_{(i,k)}[t_{b, 2}]$ denote the event that at least one copy of packet $(i,k)$ becomes
deliverable in $t_{b,2}.$ If $(i,k)$ is in at least $4M/5$ relay nodes, we can obtain
\begin{eqnarray} \Pr\left({{\cal H}_{(i,k)}\left[t_{b, 2}\right]}\right)&\geq& 1-\left(\left(1-\frac{
M}{nS^2}\right)^{\frac{4DS^2}{5\log D}} +\frac{1}{n}\right)^{\frac{4M}{5}}\nonumber\\
&\geq& \frac{3}{5}\frac{1}{\log D} \label{eq: p2}
\end{eqnarray} due to the following facts:
\begin{enumerate}
\item[(a)] Given $D/\log D=\omega(\log n|\log S|/S^6),$ from Lemma \ref{lem: RWM} provided in
Appendix B, we know that with probability at least $1-1/n,$ two nodes are in the same RW-cell for
at least $4DS^2/(5\log D)$ time slots.

\item[(b)] Given two nodes are in the same RW-cell, the probability that they are in the same cell
is $M/(nS^2).$
\end{enumerate}
Next note that the duration of $t_{b, 1}$ and $t_{b,3}$ are of a larger order than the mixing time
of the random walk (the mixing time is defined in Appendix B). From the definition of the mixing
time, we have that at any time slot belonging to $t_{b, 2},$ the nodes are almost uniformly
distributed in the unit square. Let ${\cal D}_{(i,k)}[t_{b, 2}]$ denote the event that coded packet
$(i,k)$ is delivered to its destination in $t_{b, 2}.$ Following the argument used to prove
inequality (13) of Theorem 4 in \cite{YinYanSri_06}, we have
\begin{eqnarray}\Pr\left({{\cal
D}_{(i,k)}[t_{b, 2}]}\right)\geq \frac{3}{20\log D}.\label{eq: PD}\end{eqnarray}

Now let $\mathbf{x}_{t}$ denote the positions of the nodes at time slot $t,$ and
$$\tilde{\mathbf{X}}=\{{\mathbf{x}}_t\}_{t=\frac{3(k-1)D}{\log D}+1}$$ for $k=1, \ldots, 5\log D/3.$
Also let ${\cal D}_{(i,k)}$ denote the event that $(i,k)$ is delivered in the receiving step. It is
easy to see that ${\cal D}_{(i,k)}$ occurs if ${\cal D}_{(i,k)}[t_{b,2}]$ occurs for some $t_{b,
2}\in\{1, \ldots, 5\log D/3\}.$  Note that $\{{\cal D}_{(i, k)}[t_{b, 2}]\}_{t_{b, 2}}$ are
mutually independent given $\tilde{\mathbf{X}},$ so from inequality (\ref{eq: PD}), we have
$$\Pr\left(\left.{{\cal D}_{(i,k)}}\right|\tilde{\mathbf{X}}\right)\geq 1-\left(1-\frac{3}{20}\frac{1}{\log D}\right)^{5\log D/3}\geq \frac{1}{5}.$$
Further since $B_i\geq \sum_{k=1}^{A_i} 1_{{\cal D}_{(i,k)}},$ we can conclude that
\begin{eqnarray}
E\left[B_i\left|\tilde{\mathbf{X}}, A_i\geq \frac{16}{25}\frac{D}{M}\right.\right]\geq
\frac{16}{125} \frac{D}{M}. \label{eq: mean}
\end{eqnarray}

We next bound the number of distinct coded packets deliverable in $t_b.$ Similar to inequality
(\ref{eq: p2}), we have $$\Pr\left({{\cal H}_{(i,k)}\left[t_b\right]}\right)\leq \frac{3}{\log
D}.$$ Note that no two duplicate packets from node $i$ are in one relay node, so $\{{\cal
H}_{(i,k)}[t_b]\}_k$ are mutually independent. From the Chernoff bound, we have
$$\Pr\left(\sum_{k=1}^{D/M} 1_{{\cal
H}_{(i,k)}\left[t_b\right]}  \leq \frac{16}{5} \frac{D}{M\log D}\right)\geq 1-e^{-\frac{D}{400M\log
D}}.$$ Let $\tilde{{\cal F}}_i$ denote the event that node $i$ obtains no more than $16 D/(5M\log
D)$ coded packets at each b-time-slot in the receiving step. From the union bound, we have that for
sufficiently large $n,$
\begin{eqnarray}
\Pr\left({\tilde{{\cal F}}_i} \right) \geq 1- \left(\frac{5}{3}\log D\right)
\left(e^{-\frac{D}{400M\log D}}\right)\geq 1- e^{-\frac{D}{500M\log D}}. \label{eq: bound}
\end{eqnarray}
Now let $B_i(\tilde{\mathbf{X}}, A_i, {\cal F}_i)$ denote the number of distinct coded packets
delivered to the destination of node $i$ given $(\tilde{\mathbf{X}}, A_i, {\cal F}_i),$ and
$\mathbf{X}_{t_b}$ denote an $n\times (3D/\log D)$ matrix where the $(i, t)$ entry is the position
of node $i$ at the $t^{\rm th}$ time slot of b-time-slot $t_b.$ It is easy to see that the value of
$B_i(\tilde{\mathbf{X}}, A_i, {\cal F}_i)$ is determined by $\{\mathbf{X}_{t_b}\},$ i.e., there
exists a function $f_{(\tilde{\mathbf{X}}, A_i, {\cal F}_i)}$ such that
$$B_i(\tilde{\mathbf{X}}, A_i, {\cal F}_i)=f_{(\tilde{\mathbf{X}}, A_i, {\cal F}_i)}\left(\mathbf{X}_1,\ldots, \mathbf{X}_{5\log D/3}\right).$$
From the definition of ${\cal F}_i,$ function $f_{(\tilde{\mathbf{X}}, A_i, {\cal F}_i)}$ satisfies
the following condition,
\begin{eqnarray}\left|f_{(\tilde{\mathbf{X}}, A_i, {\cal F}_i)}\left(\mathbf{X}_1,\ldots, \mathbf{X}_{t_b-1}, \mathbf{X}_{t_b},
\mathbf{X}_{t_b+1},\ldots, \mathbf{X}_{5\log
D/3}\right)-\right.\nonumber&\\\left.f_{(\tilde{\mathbf{X}}, A_i, {\cal
F}_i)}\left(\mathbf{X}_1,\ldots, \mathbf{X}_{t_b-1}, \mathbf{Y}_{t_b}, \mathbf{X}_{t_b+1},\ldots,
\mathbf{X}_{5\log D/3}\right)\right|\leq& \frac{16}{5} \frac{D}{M\log D}.\label{eq:
AHB}\end{eqnarray}

It is easy to see that $\{\mathbf{X}_{t_b}\}$ are mutually independent given $(\tilde{\mathbf{X}},
A_i,  \tilde{{\cal F}}_i).$ Then invoking Azuma-Hoeffding inequality provided in Appendix A, we can
conclude that
\begin{eqnarray}
&\Pr\left(B_{i}{(\tilde{\mathbf{X}}, A_i, {\cal F}_i)}\geq E\left[B_i\left|\tilde{\mathbf{X}}, A_i,
\tilde{{\cal F}}_i \right.\right]-\frac{1}{125}\frac{D}{M}\right)\nonumber\\ \geq&
1-2e^{-\frac{\log D}{8500}} \label{eq: AHE}
\end{eqnarray} holds for any $\tilde{\mathbf{X}}$ and $A_i.$ Inequality (\ref{eq: B}) follows from inequalities  (\ref{eq: mean}), (\ref{eq:
bound}) and (\ref{eq: AHE}).

\noindent{\bf Proof of (2):} We consider one super time slot which consists of $16D$ time slots,
and calculate the probability that the $D/7$ data packets from node $i$ are fully recovered at the
destination. Let $M_1=\sqrt[3]{n/D},$ which is the mean number of nodes in each cell at the
broadcasting step. Following the analysis above, we can prove that the following events happen with
high probability.

\noindent{\bf Node distribution:} All cells are good during the entire super-time-slot with high
probability, i.e.,
\begin{eqnarray*}
\Pr\left({\cal G}\right)\geq 1-\frac{1}{n^2}.
\end{eqnarray*}

\noindent{\bf Broadcasting:} At least $4D/5$ coded packets from a source are successfully
duplicated after the broadcasting step with high probability. Specifically, we have
\begin{eqnarray}
\Pr\left(\left.A_i\geq \frac{4}{5}D\right| {\cal G}\right)\geq 1-\frac{50}{M_1^2},
\label{eq: A_s}
\end{eqnarray} where a coded packet is said to be successfully duplicated if it is in
$4M_1/5$ distinct relay nodes.

\noindent{\bf Receiving:} At least $D/6$ distinct coded packets from a source are delivered to its
destination after the receiving step with high probability. Specifically, we have
\begin{eqnarray}
\Pr\left(\left.B_i\geq \frac{D}{6} \right| A_i\geq \frac{4}{5} D\right)\geq 1-2e^{-\frac{\log
D}{800}}. \label{eq: B_s}
\end{eqnarray}

From inequalities (\ref{eq: good}), (\ref{eq: A_s}) and (\ref{eq: B_s}), we can conclude that under
the modified Joint Coding-Scheduling Algorithm II, at each super time slot, the $D/7$ source
packets can be successfully recovered with probability at least
$$1-\frac{1}{n^2}-\frac{50}{M_1^2}-2e^{-\frac{\log
D}{800}},$$ and theorem holds.
\end{proof}

\section{One-Dimensional I.I.D. Mobility Models}
\label{sec: OD_STSM}
\subsection{Upper Bounds}

\begin{thm}
Consider the one-dimensional i.i.d. mobility model, and assume that Assumption 1-3 hold. We have
following results:
\begin{enumerate}
\item[(1)] For fast mobiles,
$$8WT\sqrt[3]{\frac{2}{\pi \Delta^2}}\sqrt[3/2]{n}(\sqrt[3/2]{D}+1)\geq E[\Lambda[T]];$$

\item[(2)] For slow mobiles,
$$4WT\sqrt[4]{\frac{4}{\pi \Delta^2}}\sqrt[3/4]{n}(\sqrt{D}+1)\geq E[\Lambda[T]].$$
\end{enumerate}
\label{thm: OD_upper}
\end{thm}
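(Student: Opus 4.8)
The plan is to follow the same route as the proof of Theorem~\ref{thm: Upper_hrw}: first reduce the two displayed bounds to a single-bit estimate of $\Pr(\tilde L_b\le L)$, and then invoke the counting arguments of Theorems~3 and~6 of \cite{YinYanSri_06} essentially verbatim (the same arguments that close the proof of Theorem~\ref{thm: Upper_hrw}). The counting step works as follows. Fix a threshold $L$, to be optimized. A bit $b$ delivered within its deadline is of one of two kinds: either some node that holds bit $b$ comes within distance $L$ of $d_b$ at some slot of the window $[t_b,t_b+D-1]$, i.e. $\tilde L_b\le L$; or the transmission that finally delivers bit $b$ to $d_b$ has range larger than $L$. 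Under Assumptions~1--3 the expected number of bits of the first kind is at most the total number of packet copies ever created --- which is $O(nWT)$ --- times $\sup\Pr(\tilde L_b\le L)$, while the number of bits of the second kind is, for fast mobiles, $O(WT/(\Delta^2L^2))$ by a geometric packing argument, and for slow mobiles $O(W\sqrt n\,T/(\Delta L))$ by a Gupta--Kumar transport-capacity argument. Optimizing over $L$ then gives the results, so the only model-specific work is the estimate of $\Pr(\tilde L_b\le L)$.

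For the one-dimensional i.i.d. model this estimate is in fact easier than in the random-walk models, since positions are independent across slots and no martingale inequality is needed: a union bound over the $D$ slots of the window gives $\Pr(\tilde L_b\le L)\le D\cdot\sup_{c,d}\Pr(\hbox{dist}(c,d)(t)\le L)$. I would bound the single-slot probability by two cases. If $c_b$ and $d_b$ move on perpendicular orbits (one H-node, one V-node), the two coordinate differences are independent, each supported on an interval, so $\Pr(\hbox{dist}\le L)\le (2L)(2L)=4L^2$, exactly as in two dimensions. If they move on parallel orbits (both H or both V), the orbit distance $\delta$ is fixed, the event is impossible unless $\delta\le L$, and otherwise it requires $|x_{c_b}(t)-x_{d_b}(t)|\le\sqrt{L^2-\delta^2}\le L$, which has probability at most $2L$ because the torus difference of two independent uniform positions has density at most $2$. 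Hence $\sup_{c,d}\Pr(\hbox{dist}(c,d)(t)\le L)\le 2L$, so $\Pr(\tilde L_b\le L)\le 2LD$ for small $L$. The key feature --- and the reason the trade-off improves over two dimensions --- is that this bound is \emph{linear} in $L$ (it is attained when the relay lies on the destination's own orbit), whereas the analogous quantity is of order $L^2$ in every two-dimensional model in this paper and in \cite{YinYanSri_06}.

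Substituting $\Pr(\tilde L_b\le L)\le 2LD$ into the counting step completes the proof. For fast mobiles one gets $E[\Lambda[T]]\lesssim nWTLD+WT/(\Delta^2L^2)$; the minimizer is $L\asymp(\Delta^2nD)^{-1/3}$, which lies in the range where the packing bound is the binding one for all $D=O(\sqrt n)$ (and for larger $D$ the claimed bound already exceeds the trivial capacity bound $O(nWT)$), and this yields a bound of order $WTn^{2/3}D^{2/3}$, i.e. $O(\sqrt[3]{D^2/n})$ per S-D pair. For slow mobiles the packing term is replaced by the transport-capacity term, giving $E[\Lambda[T]]\lesssim nWTLD+W\sqrt n\,T/(\Delta L)$, whose minimizer $L\asymp(\Delta\sqrt n\,D)^{-1/2}$ yields a bound of order $WTn^{3/4}\sqrt D$, i.e. $O(\sqrt[4]{D^2/n})$ per S-D pair; the additive ``$+1$'' absorbs the case $D=O(1)$, and the explicit constants come from tracking the packing/transport constants and the constant in the copy count.

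I expect the main obstacle to be the coupling between the single-bit estimate and the copy count rather than either ingredient alone: the relay $c_b$ is in general chosen by the scheduler with knowledge of the mobility (a good scheduler deliberately picks relays on or near $d_b$'s orbit), so ``number of copies'' times ``$\Pr(\tilde L_b\le L)$'' is not immediately a valid bound. The fix, exactly as in \cite{YinYanSri_06}, is to condition on the network history up to the slot at which $c_b$ first receives bit $b$ and to use that the later i.i.d. positions are independent of that history, so that the union bound over the remaining slots of the window still applies with the fresh single-slot probability $\le 2L$; checking that this argument, and the constants, carry over unchanged is the only real work beyond the one-dimensional geometry above. Everything else is a direct transcription of the corresponding steps in \cite{YinYanSri_06} and in the proof of Theorem~\ref{thm: Upper_hrw}.
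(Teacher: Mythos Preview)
Your proposal is correct and follows essentially the same route as the paper: a two-case analysis on the relative orientation of the orbits of $c_b$ and $d_b$ yields the per-slot hitting probabilities $4L^2$ (perpendicular) and $2L$ (parallel), hence $\Pr(\tilde L_b\le L)\le 1-(1-2L)^D\le 2LD$, after which the paper simply cites Theorems~3 and~6 of \cite{YinYanSri_06} exactly as you do. Your exposition of the counting/optimization step and of the conditioning issue is more detailed than the paper's, but the argument is the same.
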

\begin{proof}
Recall that $\tilde{L}_b$ is the minimum distance between node $d_b$ and node $c_b$ from time slot
$t_b$ to $t_b+D-1.$ If the orbits of node $c_b$ and $d_b$ are vertical to each other, then
$\tilde{L}_b\leq L$ holds only if at some time slot $t,$ node $c_b$ and $d_b$ are in the square
with side length $2L$ as in Figure \ref{fig: Hit_V}. In this case, we have
$$\Pr\left(\tilde{L}_b\leq L\right)\leq 1-(1-4L^2)^D.$$ If the orbits of node $c_b$ and $d_b$ are
parallel to each other, then it is easy to verify that $$\Pr\left(\tilde{L}_b\leq L\right)\leq
1-(1-2L)^D.$$

Thus for $L\leq 1/2,$ we can conclude that$$\Pr\left(\tilde{L}_b\leq L\right)\leq 1-(1-2L)^D\leq
2LD.$$ The rest of the proof is similar to Theorem 3 and Theorem 6 in \cite{YinYanSri_06}.
\begin{figure}[hbt]
\psfrag{2L}{$\scriptstyle 2L$} \psfrag{node i}{\scriptsize node $i$}\psfrag{node j}{\scriptsize
node $j$}\centering{\epsfig{file=./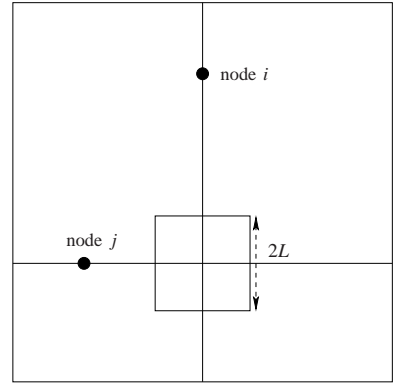,width=2in}} \caption{The Two Orbits are Vertical to
Each Other} \label{fig: Hit_V}
\end{figure}
\end{proof}

\subsection{Joint Coding-Scheduling Algorithm for Fast Mobility}
Choose $$M=\sqrt[3]{\frac{n}{D^2}}.$$ We divide the unit square into $\sqrt{n/M}$ horizontal
rectangles, named as H-rectangles; and $\sqrt{n/M}$ vertical rectangles, named as V-rectangles as
in Figure \ref{fig: Rec}. A packet is said to be destined to a rectangle if the orbit of its
destination is contained in the rectangle.
\begin{figure}[hbt]
\psfrag{S}{$\scriptstyle \sqrt{M/n}$} \centering{\epsfig{file=./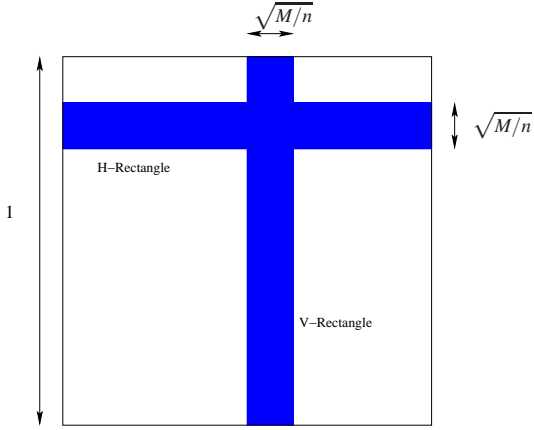,width=2.5in}} \caption{The
Unit Square is Divided into H(V)-Rectangles} \label{fig: Rec}
\end{figure}

The algorithms for the one-dimensional i.i.d. mobility model has four steps. The first step is the
Raptor encoding. The second step is the broadcasting step. In this step, the H(V)-nodes broadcast
coded packets to V(H)-nodes. The third step is the transporting step, where the V(H)-nodes
transport the H(V)-packets to the H(V)-rectangles containing the orbits of corresponding
destinations, and then broadcast packets to the H(V)-nodes whose orbits are contained in the
rectangles. After the third step, all duplicate packets are carried by the nodes that move parallel
with the destinations and their orbit distance is less than $\sqrt{M/n}.$ The fourth step is the
receiving step, where the packets are delivered to the destinations.

Since duplicate copies are generated in both the broadcasting step and the transporting step. To
distinguish them, we name the duplicate packets generated at the broadcasting step as B-duplicate
packets, and the duplicate packets generated at the transporting step as T-duplicate packets. Also
we say a B-duplicate packet is transportable if it is in the rectangle containing the orbit of the
destination of the packet.

Consider a cell with area $\tilde{A}$ and use $\tilde{M}^{H(V)}[t]$ to denote the number of
H(V)-nodes in the cell. For the one-dimensional mobility model, a cell is said to be a \emph{good
cell} at time slot $t$ if
$$\frac{9}{10}\tilde{A}n+1\leq \tilde{M}^{H(V)}[t]\leq \frac{11}{10}\tilde{A}n.$$

Next we present the Joint Coding-Scheduling Algorithm III, which achieves the maximum throughput
obtain in Theorem \ref{thm: OD_STSM_L}. Note that in the following algorithm, each time slot is
further divide into $C$ mini-time slots, and each cell is guaranteed to be active in at least one
of mini-time slot within each time slot.

\noindent{\bf Joint Coding-Scheduling Algorithm III:} The unit square is divided into a regular
lattice with $n/M$ cells, and the packet size is chosen to be $W/(2C).$ We group every $7D$ time
slots into a super time slot. At each super time slot, the nodes transmit packets as follows.
\begin{enumerate}
\item[(1)] {\bf Raptor Encoding:} Each source takes $2D/(35M)$ data packets, and uses the Raptor
codes to generate $D/M$ coded packets.

\item[(2)]  {\bf Broadcasting:} This step consists of $D$ time slots. At each time slot, the nodes
execute the following tasks:
\begin{enumerate}
\item[(i)] In each good cell, one H-node and one V-node are randomly selected. If the selected
H(V)-node has never been in the current cell before and not already transmitted all of its $D/M$
coded packets, then it broadcasts a coded packet that was not previous transmitted to $9M/10$
V(H)-nodes in the cell during the mini-time slot allocated to that cell.

\item[(ii)] All nodes check the duplicate packets they have. If more than one B-duplicate packets
 are destined to the same rectangle, randomly keep one and drop the others.
\end{enumerate}

\item[(3)]{\bf Transporting:} This step consists of $D$ time slots. At each time slot, the nodes do
the following:
\begin{enumerate}
\item[(i)] Suppose that node $j$ is a V-node ,and carries B-duplicate packet $(i,k,j).$ Node $j$
broadcasts $(i,k,j)$ to $9M/10$ H-nodes in the same cell if following conditions hold: (a) Node $j$
is in a good cell; (b) B-duplicate packet $(i,k,j)$ is the only transportable H-packet in the cell.

\item[(ii)] Each node checks the T-duplicate packets it carries. If more than one T-duplicate
packet has the same destination, randomly keep one and drop the others.
\end{enumerate}

\item[(4)]{\bf Receiving:} This step consists of $5D$ time slots.  At each time slot, if there are
no more than two deliverable packets in the cell, the deliverable packets are delivered to the
destinations with one-hop transmissions. At the end of this step, all undelivered  packets are
dropped. The destinations decode the received coded packets using Raptor decoding.
\end{enumerate}

\begin{thm}
Consider Joint Coding-Scheduling Algorithm III. Suppose $D$ is $\omega(\sqrt[5]{n})$ and
$o\left(\sqrt{n}/\sqrt[3/2]{\log n}\right),$ and the delay constraint is $7D.$ Then given any
$\epsilon>0,$ there exists $n_0$ such that for any $n\geq n_0,$ every data packet sent out can be
recovered at the destination with probability at least $1-\epsilon,$ and furthermore
\begin{eqnarray*}\lim_{T\rightarrow \infty} \Pr\left(\frac{\Lambda_i[T]}{T}\geq
\left(\frac{W}{980C}\right)\sqrt[3]{\frac{D^2}{n}} \hbox{ } \forall i\right)=1.
\end{eqnarray*}
\label{thm: OD_STSM_L}
\end{thm}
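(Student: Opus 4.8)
The plan is to run the same program as in the proof of Theorem~\ref{thm: lower_hrw}(1), which itself imitates the proof of Theorem~4 of \cite{YinYanSri_06}. I fix one super time slot of $7D=D+D+5D$ slots, prove that with probability at least $1-\epsilon$ (for all large $n$) the $2D/(35M)$ data packets taken by each source are recovered at its destination, and then deduce the long-run statement (\ref{eq: throughput}) with rate $\frac{W}{980C}\sqrt[3]{D^2/n}$ from the number of recovered packets per super time slot and the packet size $W/(2C)$, exactly as in \cite{YinYanSri_06}. Throughout I write $M=\sqrt[3]{n/D^2}$ for the mean number of nodes per cell during broadcasting; the hypothesis $D=o\left(\sqrt{n}/\sqrt[3/2]{\log n}\right)$ gives $M=\omega(\log n)$, and $D=\omega(\sqrt[5]{n})$ makes the number $D/M$ of coded packets per source grow without bound. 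These are the two facts the concentration estimates rely on.

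The first two building blocks are close analogues of the corresponding steps for the hybrid random walk. For the event $\mathcal{G}$ that every cell stays good (for both node types) throughout the super time slot, the one-dimensional i.i.d.\ model still makes node positions independent across time slots and, within a type, across nodes, so a Chernoff bound per cell per slot and a union bound over the $n/M$ cells, $7D$ slots and two node types give $\Pr(\mathcal{G})\ge 1-1/n^2$ as in (\ref{eq: good}), using $M=\omega(\log n)$. In the broadcasting step ($D$ slots), conditioned on $\mathcal{G}$ a source is selected in a good cell with probability $\Theta(1/M)$ per slot, so a Chernoff bound shows it broadcasts $\Omega(D/M)$ distinct coded packets, each to $9M/10$ relay nodes of the opposite type. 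The only losses come from the keep-one-per-rectangle rule of Step (2)(ii); these are controlled by a Markov-inequality argument in the spirit of the one leading to (\ref{eq: NC}), the relevant meeting probabilities being bounded by the one-dimensional hitting estimate $\Pr(\tilde{L}_b\le L)\le 2LD$ established in the proof of Theorem~\ref{thm: OD_upper}. The conclusion is that $\Omega(D/M)$ coded packets per source survive in at least $4M/5$ relay nodes with high probability.

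The transporting step ($D$ slots) has no counterpart in the two-dimensional algorithms and is where I expect the genuine work. Each surviving B-duplicate $(i,k,j)$ carried by a V-node $j$ must (a) be ridden by $j$ into the H-rectangle containing the orbit of destination $i+2$ (at which point it is transportable), and (b) once transportable, be re-broadcast by $j$ (when $j$ is in a good cell and $(i,k,j)$ is the unique transportable H-packet there) to $9M/10$ H-nodes whose orbits lie in that rectangle, creating T-duplicates, while the keep-one-per-destination rule of Step (3)(ii) prunes T-duplicates that share a destination. I would first show a B-duplicate is transportable for $\Omega(D)$ of the $D$ transporting slots with high probability: its carrier's perpendicular coordinate is fixed, so by uniformity the carrier lies in the correct width-$\sqrt{M/n}$ rectangle a constant fraction of the time, which a Chernoff bound makes quantitative. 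Then I would bound the packets lost to the two collision rules by Markov's inequality as in the broadcasting step, and finally argue that, with $\mathcal{G}$ holding, each good cell re-broadcasts at least one transportable packet in almost every slot, so that a constant fraction of the transportable B-duplicates become T-duplicates in at least $4M/5$ relays. The net effect is $\Omega(D/M)$ coded packets per source installed, by the end of the transporting step, in at least $4M/5$ relay nodes that move parallel to the destination at orbit distance less than $\sqrt{M/n}$. Pinning down the constants, and especially the conditional-independence structure that the final concentration step needs, is the delicate part.

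The receiving step ($5D$ slots) then mirrors the receiving analysis in the proof of Theorem~\ref{thm: lower_hrw}(1). Group the slots into $\Theta(\log D)$ b-time-slots of length $\Theta(D/\log D)$; using the parallel-orbit hitting estimate of Theorem~\ref{thm: OD_upper} together with the fact that the relays move parallel to the destination at orbit distance less than $\sqrt{M/n}$, one checks that a given coded packet becomes deliverable and is then actually delivered within a single b-time-slot with probability $\Omega(1/\log D)$ (the choice $M=\sqrt[3]{n/D^2}$ is made precisely so that the product of the b-time-slot length, $\sqrt{M/n}$, and $M$ is of constant order), so over the $\Theta(\log D)$ b-time-slots a constant fraction of the $\Omega(D/M)$ coded packets of each source is delivered in expectation. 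One then bounds, by a Chernoff bound, the number of distinct coded packets that can be deliverable in any one b-time-slot, and converts the expectation into a high-probability bound by the bounded-difference/Azuma--Hoeffding argument of (\ref{eq: AHB})--(\ref{eq: AHE}), where $D=\omega(\sqrt[5]{n})$ guarantees $D/M$ is large enough. A union bound over the three bad events and the $2n$ sources, together with the Raptor decoding guarantee, shows every source's $2D/(35M)$ data packets are recovered with probability at least $1-\epsilon$, and the passage from this to $\lim_{T\to\infty}\Pr(\cdot)=1$ at rate $\frac{W}{980C}\sqrt[3]{D^2/n}$ is identical to the corresponding argument in \cite{YinYanSri_06}.
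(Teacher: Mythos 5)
There is a genuine gap in your transporting step. You claim that a single B-duplicate ``is transportable for $\Omega(D)$ of the $D$ transporting slots with high probability'' because its carrier's perpendicular coordinate is fixed and it ``lies in the correct width-$\sqrt{M/n}$ rectangle a constant fraction of the time.'' The per-slot probability of the carrier being in the right rectangle is $\sqrt{M/n}$, so the expected number of such slots is $D\sqrt{M/n}=D^{2/3}/n^{1/3}$, which is $o(1)$ under the hypothesis $D=o(\sqrt{n}/\sqrt[3/2]{\log n})$. A typical single copy is therefore \emph{never} transportable during the transporting step, and a Chernoff bound cannot rescue the claim. The paper's argument instead aggregates over the $\Theta(M)$ copies of a coded packet: the per-slot probability that \emph{some} copy is transportable and collision-free is of order $M^{3/2}/\sqrt{n}=1/D$, so each coded packet is re-broadcast at least once in $D$ slots with a constant probability (the paper makes this $\geq 1/4$), and independence across the $D/M$ coded packets (which land in distinct V-rectangles) then yields concentration on $C_i^b$.

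The receiving step is also mis-modeled. You import the b-time-slot decomposition, mixing-time argument, and Azuma--Hoeffding step from the proof of Theorem~\ref{thm: lower_hrw}(1); that machinery exists precisely to cope with the Markovian dependence of the hybrid random walk. For the one-dimensional \emph{i.i.d.}\ model the node positions are fresh every time slot, so the paper simply points to the i.i.d.\ receiving-step proof (``similar to the proof of inequality (13) in \cite{YinYanSri_06}'') and uses a direct Chernoff argument: the per-slot delivery probability for a packet with $\Theta(M)$ T-duplicates at orbit distance $<\sqrt{M/n}$ is $\Theta(M\sqrt{M/n})=\Theta(1/D)$, and over $5D$ i.i.d.\ slots the packet is delivered with constant probability. (Your parenthetical claim that the product of the b-time-slot length, $\sqrt{M/n}$, and $M$ is of constant order is actually $\Theta(1/\log D)$; the constant-order product appears only if you use the entire $5D$-slot window, which is the i.i.d.\ route.) Your minor attribution in the broadcasting step is also off: the drop-probability bound there comes from counting competitive nodes per rectangle and the per-slot selection probability, not from the hitting estimate $\Pr(\tilde{L}_b\le L)\le 2LD$ of Theorem~\ref{thm: OD_upper}.
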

\begin{proof}
Consider one super time slot and let $\cal G$ denote the event that all cells are good in the super
time slot. The proof will show the following events happen with high probability.

\noindent{\bf Node distribution:} All cells are good during the entire super-time-slot with high
probability. Specifically, it is easy to verify that
\begin{eqnarray}\Pr\left({\cal G}\right)\geq 1-\frac{1}{n^2}. \label{eq: f}
\end{eqnarray}

\noindent{\bf Broadcasting:} At least $2D/(3M)$ coded packets from a source are successfully
duplicated after the broadcasting step with high probability, where a coded packet is said to be
successfully duplicated if it has at least $4M/5$ B-duplicate packets. Specifically, we will show
\begin{eqnarray}
\Pr\left(\left.A_i\geq \frac{2}{3}\frac{D}{M}\right|{\cal G}\right) \geq 1- \frac{40}{M}.
\label{eq: 31}
\end{eqnarray}

\noindent{\bf Transporting:} At least $9D/(70M)$ coded packets from a source are successfully
transported after the transporting step with high probability, where a coded packet is said to be
successfully transported if it has at least $4M/5$ T-duplicate copies. Letting $C_i$ denote the
number of successfully transported packets from node $i,$ we will show
\begin{eqnarray}
\Pr\left(\left. C_i\geq \frac{9}{70}\frac{D}{M} \right| {\cal G},  A_i\geq \frac{2}{3}\frac{D}{M}
\right) \geq 1- 3e^{-\frac{D}{1500M}}-\frac{100}{M^2}. \label{eq: 33}
\end{eqnarray}

\noindent{\bf Receiving:} At least $9D/(140M)$ distinct coded packets from a source are delivered
to its destination after the receiving step. Specifically, we will show
\begin{eqnarray}
\Pr\left(\left. B_i\geq \frac{9}{140} \frac{D}{M}\right| C_i\geq \frac{9}{70} \frac{D}{M} \right)
\geq 1- 2e^{-\frac{D}{8000 M}}. \label{eq: 34}
\end{eqnarray}

If the claims above hold, the probability that the $D/10M$ data packets are fully recovered in one
super time slot is at least
$$1-\frac{1}{n^2}-\frac{40}{M}- 3e^{-\frac{D}{1500M}}-\frac{100}{M^2}- 2e^{-\frac{D}{8000 M}}.$$
The theorem follows from a similar argument provided in Theorem 4 of \cite{YinYanSri_06}.

{\bf Analysis of broadcasting step:} Assume that ${\cal G}$ occurs, then at each time slot, node
$i$ is selected with probability $10/(11M).$ Note that there are $\sqrt{n/M}$ cells on the orbit of
node $i,$ and node $i$ is uniformly randomly positioned in one of the cells. Thus, the number of
coded packets broadcast by node $i$ is equal to the number of non-empty bins of following
balls-and-bins problem.

\noindent\emph{Balls-and-Bins Problem:} Suppose we have $\sqrt{n/M}$ bins and one trash can. At
each time slot, we drop a ball. Each bin receives the ball with probability $10/(11\sqrt{nM}),$ and
the trash can receives the ball with probability $1-10/(11M).$ Repeat this $D$ times, i.e., $D$
balls are dropped.

\noindent From Lemma \ref{lem: ball-bin} provided in Appendix A, we have
$$\Pr\left(\left.\sum_{k=1}^{D} 1_{{\cal B}_{i}[t]} \geq \frac{9D}{11M}\right|{\cal G}\right)\geq 1-e^{-\frac{3D}{1000M}}.$$

We say two nodes are competitive with each other if  the orbits of their destinations are contained
in the same rectangle, so each node has $\sqrt{Mn}-1$ competitive nodes. Suppose that node $i$ is
an H-node and node $j$ is a V-node. Let $\tilde{N}^c_{i,j}(t)$ denote the number of node $i$'s
competitive nodes in the V-rectangle containing the orbit of node $j$ at time slot $t.$ Since nodes
are uniformly, randomly positioned on their orbits, from the Chernoff bound, we have
\begin{eqnarray}\Pr\left(\tilde{N}_{i,j}^c(t)\leq \frac{11}{10}M\right)\geq 1-e^{-\frac{M}{300}}.
\label{eq: DP}
\end{eqnarray}
Now consider B-duplicate packet $(i,k,j)$ and assume that node $z,$ a competitive of node $i,$ is
in the V-rectangle containing the orbit of node $i.$ Then $(i,k,j)$ might be dropped when it is in
the same cell as node $z,$ and node $z$ is selected to broadcast. The probability of this event is
at most
\begin{eqnarray}
\sqrt{\frac{M}{n}}\times\frac{10}{9M}. \label{eq: PH}
\end{eqnarray}
From (\ref{eq: DP}), (\ref{eq: PH}) and the union bound, we can conclude that the probability that
$(i, k, j)$ is dropped at time slot $t$ is at most
\begin{eqnarray}
e^{-\frac{M}{300}}+\frac{11}{10}M\times\sqrt{\frac{M}{n}}\times\frac{10}{9M}
=e^{-\frac{M}{300}}+\frac{11}{9}\sqrt{\frac{M}{n}},\label{eq: NoC}
\end{eqnarray}
which implies that the probability of $(i, k, j)$ dropped in the broadcasting step is at most
\begin{eqnarray*}
1-\left(1-e^{-\frac{M}{300}}-\frac{11}{9}\sqrt{\frac{M}{n}}\right)^D \leq
De^{-\frac{M}{300}}+\frac{11}{9M}
\end{eqnarray*}  Inequality (\ref{eq: 31}) follows from above inequality and the Markov inequality.

{\bf Analysis of transporting step:} Consider an H-node $i.$ Let $C_{(i,k)}$ denote the number of
B-duplicate packets which are contained in the V-rectangle where $(i,k)$ broadcast, and are
destined to the same H-rectangle as node $i.$ Note the following facts:
\begin{enumerate}
\item[(a)] Each node has $\sqrt{Mn}-1$ competitive nodes.

\item[(b)] Each H-node broadcasts at most $D/M$ coded packets. The probability that a coded packet
broadcast in a specific V-rectangle is at most
$$\frac{D}{M}\times\sqrt{\frac{M}{n}}=\frac{D}{\sqrt{nM}}.$$

\item[(c)] Each broadcast generates $9M/10$ duplicate copies.
\end{enumerate}
Thus, from the Chernoff bound, we have that
$$\Pr\left(C_{(i,k)}\leq DM\right)\geq 1-e^{-\frac{D}{300M}},$$ which implies that for sufficiently large $n,$
\begin{eqnarray}\Pr\left(C_{(i,k)}\leq DM \hbox{ } \forall \hbox{ } k\right)\geq
1-\frac{D}{M}e^{-\frac{D}{300M}}\geq 1-e^{-\frac{D}{400M}}.\label{eq: 34_1}\end{eqnarray} Let
${\cal T}_{(i,k)}$ denote the event that a B-duplicate packet is broadcast at time slot $t$ in the
transporting step. If $(i,k)$ is successfully duplicated, i.e., there are at least $4M/5$
B-duplicate copies of $(i,k),$ we have
$$\Pr\left({\cal T}_{(i,k)}(t)\right)\geq
\frac{4M}{5}\sqrt{\frac{M}{n}}\left(1-\sqrt{\frac{M}{n}}\right)^{C_{(i,k)}+\frac{9M}{10}-1}.$$
Further, let ${\cal T}_{(i,k)}$ denote the event that at least one copy of $(i,k)$ is broadcast in
the transporting step. Then for sufficiently large $n,$ we can obtain that
\begin{eqnarray*}&&\Pr\left({{\cal T}_{(i,k)}}\left|C_{(i,k)}\leq DM\right.\right) \\
&\geq& 1-\left(1-
\frac{4M}{5}\sqrt{\frac{M}{n}}\left(1- \sqrt{\frac{M}{n}}\right)^{DM+9M/10-1} \right)^{D} \\
&\geq& \frac{1}{4}. \end{eqnarray*} Let $C^b_i$ denote the number of distinct coded packets of node
$i$ broadcast in the transporting step, i.e., $$C^b_i=\sum_{k=1}^{D/M} 1_{{\cal T}_{(i,k)}}.$$
Since different coded packets of node $i$ are broadcast in different V-rectangles, $\{{\cal
T}_{(i,k)}\}$ are mutually independent. From the Chernoff bound, we have
\begin{eqnarray}\Pr\left(\left.C^b_i\geq \frac{D}{7 M}\right| A_i\geq \frac{2D}{3M}, C_{(i,k)}\leq MD \hbox{ }
\forall k\right)\geq 1-2e^{-\frac{D}{900M}}.\label{eq: 34_2}\end{eqnarray}

In the transporting step, a T-duplicate copy will be dropped if the node carrying it obtains
another packet destined to the same destination. Consider a T-duplicate packet $(i,k,l)$ carried by
node $l.$ Note following facts:
\begin{enumerate}
\item[(a)] Coded packets of node $i$ are broadcast in at most $D/M$ V-rectangles.

\item[(b)] Each rectangle contains at most $9M/10$ B-duplicate copies from node $i.$
\end{enumerate}
Thus, the probability of $(i,k,l)$ dropped at time slot $t$ is at most
$$\frac{D}{M}\sqrt{\frac{M}{n}}\left(1-\sqrt{\frac{M}{n}}\right)^{\frac{9M}{10}}$$
The node mobility is independent across time, so the probability of $(i,k,l)$ dropped in the
transporting step is at most
$$1-\left(1-\frac{D}{M}\sqrt{\frac{M}{n}}\left(1-\sqrt{\frac{M}{n}}\right)^{\frac{9M}{10}}\right)^D \leq \frac{1}{M^2}.$$

Let $\bar{N}_i^d$ denote the number of duplicate packets dropped in the transporting step. Note
that $9MC_i^b/10$ T-duplicate packets are generated, and each of them has probability $1/M^2$ to be
dropped. Using the Markov inequality, we have
$$\Pr\left(\bar{N}_i^d\geq \frac{M C_i^b}{100}\right)\leq \frac{90}{M^2},$$ which implies
\begin{eqnarray}\Pr\left(C_i\geq \frac{9}{10} C_i^b \right)\geq 1-\frac{90}{M^2}\label{eq: 34_3}\end{eqnarray} since
otherwise, more than $MC_i^b/100$ duplicate copies are dropped. Inequality (\ref{eq: 33}) follows
form inequality (\ref{eq: 34_1})-(\ref{eq: 34_3}).

{\bf Analysis of receiving step:} The proof is similar to the proof of inequality (13) in
\cite{YinYanSri_06}.
\end{proof}

\subsection{Joint Coding-Scheduling Algorithm for Slow Mobility}
In this subsection, we propose an algorithm which achieves the delay-throughput trade-off obtained
in Theorem \ref{thm: OD_upper}. First choose
\begin{eqnarray*}
M_1&=&\sqrt[4]{\frac{n}{4D^2}}\\
M_2&=&M_1^2
\end{eqnarray*}
and scale the packet size to be
$$\frac{W}{4c_s M_1}$$ where $c_s$ is a constant independent of $n$ as in \cite{YinYanSri_06}.
Further, we divide the unit square into $\sqrt{n/M_2}$ horizontal rectangles, named as
H-rectangles; and $\sqrt{n/M_2}$ vertical rectangles, named V-rectangles.

\noindent{\bf Joint Coding-Scheduling Algorithm IV:} We group every $14D$ time slots into a super
time slot. At each super time slot, the packets are coded and transmitted as follows:
\begin{enumerate}
\item[(1)] {\bf Raptor Encoding:} Each source takes $D/50$ data packets, and uses the Raptor codes
to generate $D$ coded packets.

\item[(2)]  {\bf Broadcasting:} The unit square is divided into a regular lattice with $n/M_1$
cells. This step consists of $D$ time slots. At each time slot, the nodes execute the following
tasks:
\begin{enumerate}
\item[(i)] The nodes in good cells take their turns to broadcast. If node $i$ is a H(V)-node and
has never been in the current cell before, it randomly selects $9M_1/10$ V(H)-nodes and broadcasts
a coded packet to them.

\item[(ii)] Each node checks the B-duplicate packets it carries. If there are multiple B-duplicate
packets destined to the same rectangle, randomly pick one and drop the others.
\end{enumerate}

\item[(3)]{\bf Transporting:} The unit square is divided into a regular lattice with $n/M_1$ cells.
This step consists of $2D$ time slots. At each time slot, the nodes do the following:
\begin{enumerate}
\item[(i)] Suppose node $j$ carries duplicate packet $(i,k,j),$ which is an H-packet. If node $j$
is in a good cell and and $(i,k,j)$ is deliverable, node $j$ broadcasts the packet to $9M_1/10$
H-nodes in the cell.

\item[(ii)] Each node checks the T-duplicate packets it carries. if there is more than one
T-duplicate packet destined to the same destination, randomly pick one and drop the others.
\end{enumerate}

\item[(4)]{\bf Receiving:} The unit square is divided into a regular lattice with $n/M_2$ cells.
This step consists of $12D$ time slots. At each time slot, the nodes in good cells do the following
at the mini-time slot allocated to their cells:

\begin{enumerate}
\item[(i)] The nodes which contain deliverable packets randomly pick one deliverable packet and
send a request to the corresponding destination.

\item[(ii)] For each destination, it accepts only one request.

\item[(iii)] The nodes whose requests are accepted transmit the deliverable packets to their
destinations using the highway algorithm proposed in \cite{FraDouTseThi_05}.
\end{enumerate}
At the end of this step, all undelivered duplicate packets are dropped.  Destinations use Raptor
decoding to decode the received coded packets.
\end{enumerate}

\begin{thm}
Consider Joint Coding-Scheduling Algorithm IV. Suppose $D$ is both $\omega(1)$ and
$o\left(\sqrt{n}/\log^2 n\right),$ and the delay constraint is $14D.$ Then given any $\epsilon>0,$
there exists $n_0$ such that for any $n\geq n_0,$ every data packet sent out can be recovered at
the destination with probability $1-\epsilon,$ and furthermore\begin{eqnarray*}\lim_{T\rightarrow
\infty} \Pr\left(\frac{\Lambda_i[T]}{T}\geq \left(\frac{W}{1400\sqrt{2} c_s
C}\right)\sqrt[4]{\frac{D^2}{n}} \hbox{ } \forall i\right)=1.
\end{eqnarray*}
\label{thm: OD_TSSM_L}
\end{thm}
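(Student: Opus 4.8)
The plan is to follow exactly the four–stage template used in the proof of Theorem~\ref{thm: OD_STSM_L}, but with the slow–mobility parameters of Algorithm~IV: $M_1=\sqrt[4]{n/(4D^2)}$, $M_2=M_1^2=\sqrt{n/(4D^2)}$, packet size $W/(4c_sM_1)$, the $14D$–slot super time slot split as $D$ (broadcasting) $+\,2D$ (transporting) $+\,12D$ (receiving), and two cell granularities, $n/M_1$ cells in the broadcasting and transporting steps and $n/M_2$ cells in the receiving step. As before I would fix a single super time slot, introduce the event ${\cal G}$ that every cell (at whichever granularity is in force) is good throughout the super time slot, and show that, conditioned on ${\cal G}$, the $D/50$ data packets of an arbitrary source $i$ are recovered at destination $i+2$ by the end of the super time slot with probability $1-o(1)$; a renewal/law–of–large–numbers argument as in Theorem~4 of~\cite{YinYanSri_06} then converts this into the almost–sure throughput bound. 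Note that $D=o(\sqrt n/\log^2 n)$ forces $M_1=\omega(\log n)$ and $M_2=\omega(\log^2 n)$, which is what the Chernoff bounds below (and the highway scheme) need, while $D=\omega(1)$ is what makes the $D^{-\Theta(1)}$ error terms vanish.

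\textbf{Node distribution and broadcasting.} For ${\cal G}$ I would use a Chernoff bound on the $\mathrm{Binomial}$ number of H–nodes (resp.\ V–nodes) in a fixed cell together with a union bound over the $\le n/M_1$ (resp.\ $\le n/M_2$) cells and the $14D$ slots, obtaining $\Pr({\cal G})\ge 1-1/n^2$ since $M_1=\omega(\log n)$. For the broadcasting step, conditioned on ${\cal G}$ each H(V)–node is selected with probability $\Theta(1/M_1)$ per slot, and the number of distinct coded packets it actually broadcasts equals the number of distinct cells it visits on its orbit, which I would lower bound by $\Omega(D)$ w.h.p.\ via the balls–and–bins Lemma~\ref{lem: ball-bin}. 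A B–duplicate copy $(i,k,j)$ is lost only if node $j$ shares a cell with a competitive node of $i$ that is itself broadcasting; bounding the number of competitive nodes in a rectangle by a Chernoff bound and the per–slot collision probability by $\sqrt{M_1/n}\cdot\Theta(1/M_1)$, summing over the $D$ slots and applying Markov's inequality then gives the analogue of (\ref{eq: 31}), namely $\Pr\!\left(A_i\ge\tfrac45 D\mid{\cal G}\right)\ge 1-O(1/M_1)$.

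\textbf{Transporting and receiving.} For a successfully duplicated $(i,k)$, at least $4M_1/5$ V–nodes carry a copy; I would show that with constant probability at least one of them, during the $2D$ transporting slots, lands in a good cell inside the H–rectangle containing the orbit of $i+2$ and is the unique node holding a transportable H–packet for that rectangle — the first two events have probability $\Omega(M_1\sqrt{M_1/n})$ per slot, boosted to $\Omega(1)$ over $2D$ slots as in the fast analysis, and the uniqueness clause is controlled by a Chernoff bound on how many B–duplicate copies of $i$ ever enter a given rectangle. Since distinct coded packets of $i$ are broadcast into distinct V–rectangles these events are independent over $k$, so a Chernoff bound yields $\Omega(D)$ successfully transported packets, and a Markov bound on T–duplicate drops ($O(\sqrt{M_1/n})$ per slot, $O(1/M_1^2)$ over $2D$ slots, times $\Theta(M_1D)$ copies) gives $\Pr\!\left(C_i\ge c_2D\mid{\cal G},A_i\ge\tfrac45 D\right)\ge 1-O(1/M_1^2)$. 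In the receiving step the cells have area $M_2/n$, so a T–duplicate copy and $i+2$ share a cell in a given slot with probability $\Theta(1/M_1^2)$; with $\ge 4M_1/5$ copies over $12D$ slots the packet becomes deliverable and gets a request accepted with probability $\Omega(1)$, and the accepted delivery is pushed through the highway scheme of~\cite{FraDouTseThi_05}, which carries $\Theta(W/(c_sM_1))$ bits over $\Theta(\sqrt{n/\log n})$ hops within one slow–mobility slot — this is where $c_s$, the $\sqrt2$ and the $C$ enter. A bounded–differences/Azuma–Hoeffding argument over the (independent) per–b–time–slot node configurations, exactly as for (\ref{eq: AHE})–(\ref{eq: B_s}), then gives $\Pr\!\left(B_i\ge\tfrac{D}{50}\mid C_i\ge c_2D\right)\ge 1-2D^{-\Theta(1)}$, and the Raptor decoding guarantee turns this into recovery of the $D/50$ data packets.

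\textbf{Assembly and main obstacle.} Combining the four bounds by a union bound, in one super time slot the $D/50$ data packets of node $i$ are recovered with probability at least $1-\big(1/n^2+O(1/M_1)+O(1/M_1^2)+2D^{-\Theta(1)}\big)\to 1$; since the super time slots are essentially i.i.d.\ and each has length $14D$ and delivers $(D/50)(W/(4c_sM_1))$ bits on success, a Borel–Cantelli argument gives $\liminf_{T\to\infty}\Lambda_i[T]/T\ge \tfrac{(D/50)(W/(4c_sM_1))}{14D}\cdot\tfrac1C$ almost surely, and substituting $M_1=\tfrac1{\sqrt2}(n/D^2)^{1/4}$ produces the claimed constant $W/(1400\sqrt2\,c_sC)$ times $\sqrt[4]{D^2/n}$. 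The main obstacle is the transporting and receiving analyses: one must control the transporting step's competition clause simultaneously with the event that some copy reaches the correct rectangle, and then control the interaction of the highway multi–hop scheduling with the deliverable–packet count in the receiving step — and, crucially, because only $D=\omega(1)$ is assumed (not $D=\omega(n^{1/5})$ as in the fast case), every tail estimate in these two steps must be polynomial in $M_1$, $M_2$ or $D$ rather than exponential in $D/M_1$.
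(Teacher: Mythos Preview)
Your proposal is essentially the same four–stage argument the paper gives (the paper's own proof is a two–paragraph sketch that simply states the four event bounds $\Pr({\cal G})\ge 1-1/n^2$, $\Pr(A_i\ge 3D/10\mid{\cal G})\ge 1-1/n^2$, $\Pr(C_i\ge 3D/40\mid{\cal G},A_i)\ge 1-e^{-D/3600}-180/\log n$, $\Pr(B_i\ge D/40\mid{\cal G},C_i)\ge 1-2e^{-D/1000}$ and refers back to Theorem~\ref{thm: OD_STSM_L}), so the overall plan is right.

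Two small corrections, though. First, in Algorithm~IV the transporting step has \emph{no} ``uniqueness'' clause: because mobility is slow, every node in a good cell takes its turn and broadcasts any transportable B--duplicate it carries, so you should drop the ``unique node holding a transportable H--packet'' condition you imported from Algorithm~III and instead just compute the probability that at least one of the $\ge 4M_1/5$ carriers lands in the target H--rectangle during the $2D$ slots. This actually simplifies the analysis and is why the paper's transporting error is $e^{-\Theta(D)}+O(1/\log n)$ rather than $e^{-\Theta(D/M)}$. Second, and relatedly, in the broadcasting step every node broadcasts at every slot (not just a randomly selected one), so the drop probability for a B--duplicate and the resulting tail for $A_i$ are governed by different quantities than in the fast case; the paper in fact gets $1-1/n^2$ here rather than $1-O(1/M_1)$. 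Your identification of the ``$D=\omega(1)$ only'' obstacle is exactly right, and these two algorithmic differences (everyone broadcasts; no uniqueness filter) are precisely what make the error terms depend on $D$ and $\log n$ alone rather than on $D/M_1$.
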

\begin{proof} Following the analysis of Theorem \ref{thm: OD_STSM_L}, we can show the following events happen with
high probability.

\noindent{\bf Node distribution:} All cells are good during the entire super-time-slot with high
probability, i.e.,
\begin{eqnarray}
\Pr\left({\cal G}\right)\geq 1-\frac{1}{n^2}. \label{eq: 111}
\end{eqnarray}

\noindent{\bf Broadcasting:} At least $3D/10$ coded packets from a source are successfully
duplicated after the broadcasting step with high probability, where a coded packet is said to be
successfully duplicated if it has at least $M_1/3$ B-duplicate packets. Specifically, we have
\begin{eqnarray}
\Pr\left(\left.A_i\geq \frac{3}{10} D\right| {\cal G}\right) \geq 1- \frac{1}{n^2}. \label{eq: 11}
\end{eqnarray}

\noindent{\bf Transporting:} At least $3D/40$ coded packets from a source are successfully
transported after the transporting step with high probability, where a coded packet is said to be
successfully transported if it has at least $4M_1/5$ T-duplicate copies. Specifically, we have
\begin{eqnarray}
\Pr\left(\left. C_i\geq \frac{3}{40}D\right|{\cal G}, A_i\geq \frac{3}{10}D \right) \geq 1-
e^{-\frac{D}{3600}}-\frac{180}{\log n}. \label{eq: 12}
\end{eqnarray}

\noindent{\bf Receiving:} At least $D/40$ distinct coded packets from a source are delivered to its
destination after the receiving step. Specifically, we have
\begin{eqnarray}
\Pr\left(\left. B_i\geq \frac{D}{40}\right|{\cal G}, C_i\geq \frac{3}{40}D\right) \geq 1-
2e^{-\frac{D}{1000}}. \label{eq: 13}
\end{eqnarray}

Thus, the probability that the $D/50$ data packets are fully recovered in one super time slot is at
least
$$1-\frac{2}{n^2}-
e^{-\frac{D}{3600}}-\frac{180}{\log n}- 2e^{-\frac{D}{1000}},$$ and theorem holds.
\end{proof}

\section{One-Dimensional Hybrid Random Walk Model, Fast Mobiles and Slow Mobiles}
\label{sec: ORW} In this section, we present the optimal delay-throughput trade-offs of the
one-dimensional hybrid random walk model. The results can be proved following the analysis of the
one-dimensional i.i.d. mobility and the analysis of the two-dimensional hybrid random walk. The
details are omitted here for brevity.

\begin{thm}
Consider the one-dimensional hybrid random walk model and assume that Assumption 1-3 hold. Then for
$S=o(1)$ and $D=\omega(1/S^2),$ We have following results:
\begin{enumerate}
\item[(1)] For fast mobiles, \begin{eqnarray}24WT\sqrt[3]{\frac{1}{\pi
\Delta^2}}\sqrt[3/2]{n}(\sqrt[3/2]{D}+1)\geq E[\Lambda[T]].\label{eq: T1D}\end{eqnarray}

When $S=o(1)$ and $D$ is both $\omega(\max\{(\log^2 n)|\log S|/S^4, \sqrt[5]{n}\log n\})$ and
$o(\sqrt{n}/\sqrt[3/2]{\log n}),$ Joint Coding-Scheduling Algorithm III can be used to achieve a
throughput same as (\ref{eq: T1D}) except for a constant factor.

\item[(2)]For slow mobiles, \begin{eqnarray}12WT\sqrt[4]{\frac{1}{\pi
\Delta^2}}\sqrt[3/4]{n}(\sqrt{D}+1)\geq E[\Lambda[T]].\label{eq: T2D}\end{eqnarray}

When  $S=o(1)$ and $D$ is both $\omega((\log^2 n)|\log S|/S^4)$ and $o(\sqrt{n}/\log^2 n),$ Joint
Coding-Scheduling Algorithm IV can be used to achieve a throughput same as (\ref{eq: T2D}) except
for a constant factor.
\end{enumerate}

\rightline{$\square$}
\end{thm}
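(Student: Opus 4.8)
The plan is to mirror the two-part structure of the earlier results: first establish the upper bounds \eqref{eq: T1D} and \eqref{eq: T2D}, then argue that the modified Joint Coding-Scheduling Algorithms III and IV achieve matching throughput up to constants. For the upper bounds I would follow the template of Theorem~\ref{thm: OD_upper} combined with the random-walk hitting-time estimate from the proof of Theorem~\ref{thm: Upper_hrw}. The key quantity is again $\tilde L_b$, the minimum distance between $c_b$ and $d_b$ over the $D$ slots after bit $b$ is generated. When the two orbits are perpendicular, $\tilde L_b \le L$ forces the two nodes into a common RW-interval-sized window; when the orbits are parallel the event is controlled by a one-dimensional random walk on $1/S$ intervals. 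In either case, letting $N^{\rm rw}_b$ count the slots in which $c_b$ and $d_b$ are in the same or adjacent RW-intervals, the same Jensen-plus-hitting-time argument used for \eqref{eq: hp} gives $E[N^{\rm rw}_b] = O(SD)$ for $D=\omega(1/S^2)$ (note the exponent: one-dimensional walks recur on the order of $S$ rather than $S^2$), hence $\Pr(\tilde L_b \le L) = O(LD)$ for the perpendicular case and $O(LD)$ for the parallel case as well after taking the worse of the two. Feeding this tail bound into the counting argument of Theorems~3 and~6 of \cite{YinYanSri_06} — sum $W$ over all delivered bits, use that a bit can only be delivered when some copy is within distance $L=\Theta(r)$ of the destination, and optimize over the transmission radius subject to the interference (protocol-model) packing constraint — yields \eqref{eq: T1D} for fast mobiles (radius $r$, one hop, $\Theta(\sqrt[3]{D^2/n})$) and \eqref{eq: T2D} for slow mobiles (radius scaling with the number of hops, $\Theta(\sqrt[4]{D^2/n})$), with the stated explicit constants coming from the packing lemma.

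For the achievability half I would reuse Joint Coding-Scheduling Algorithms III and IV verbatim (they are already stated for the one-dimensional i.i.d. model) and only re-verify the probabilistic estimates that invoked the i.i.d. mobility assumption. The three events to recheck are node distribution, the broadcasting/transporting drop analysis, and the receiving step. Node distribution is unaffected in form: a cell is good with the same Chernoff-and-union-bound argument, requiring $M=\omega(\log n)$, which is why the hypotheses demand $D=o(\sqrt n/\sqrt[3/2]{\log n})$ (resp.\ $o(\sqrt n/\log^2 n)$). The broadcasting and transporting steps rely on two nodes on perpendicular orbits meeting in a common cell; under i.i.d.\ mobility the meeting probability per slot is exactly computable, whereas under the hybrid random walk one needs the mixing/coupling bound of Lemma~\ref{lem: RWM} to conclude that over a block of $\omega((\log n)|\log S|/S^2)$ slots the position distribution is essentially uniform, so the per-block meeting probability is $\Theta(1/\sqrt{nM})$ as before. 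Pushing this through the block decomposition of the receiving step (group $3D/\log D$ slots, split into three, use the middle part as an ``almost-uniform'' window, the outer parts as mixing buffers) recovers inequalities of the form \eqref{eq: p2}–\eqref{eq: PD}, and the Azuma–Hoeffding concentration of the number of distinct delivered coded packets goes through unchanged since it never used independence of the mobility across cells, only across super-time-slots.

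The main obstacle — and the reason the extra condition $D=\omega((\log^2 n)|\log S|/S^4)$ (resp.\ $/S^4$ again for the slow case, sharpened from the $/S^6$ of the two-dimensional walk) appears — is ensuring that within each b-time-slot the random walk has mixed enough, uniformly over all $n$ node pairs, that the ``almost uniform'' substitution incurs only a lower-order error; this is exactly what Lemma~\ref{lem: RWM} supplies, but one must check that the mixing time of the one-dimensional walk on $1/S$ intervals, which is $\Theta(|\log S|/S^2)$, is a smaller order than the block length $3D/\log D$, forcing $D/\log D = \omega(|\log S|/S^2)$ and, after the union bound over $n$ pairs and $D/M$ packets, the stated $\omega((\log^2 n)|\log S|/S^4)$ threshold (the extra $1/S^2$ and $\log^2 n$ coming from $M=\Theta(\sqrt[3]{n/D^2})$, resp.\ $\sqrt[4]{n/D^2}$, and the union bounds). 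Once the mixing estimate is in hand, the remaining computations are the same counting and Chernoff/Markov bookkeeping as in Theorems~\ref{thm: OD_STSM_L} and~\ref{thm: OD_TSSM_L}, so the analysis is omitted for brevity as indicated.
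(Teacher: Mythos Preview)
Your proposal is correct and follows essentially the same approach as the paper: the paper itself omits the proof entirely, stating only that ``the results can be proved following the analysis of the one-dimensional i.i.d.\ mobility and the analysis of the two-dimensional hybrid random walk,'' and your outline carries out precisely this combination---using the Jensen/hitting-time bound of Theorem~\ref{thm: Upper_hrw} together with the orbit-geometry case split of Theorem~\ref{thm: OD_upper} for the upper bounds, and the block-decomposition/mixing argument of Theorem~\ref{thm: lower_hrw} layered onto Algorithms~III and~IV for achievability. The one-dimensional recurrence scaling ($E[N^{\rm rw}_b]=O(SD)$ under $D=\omega(1/S^2)$, versus $O(S^2D)$ under $D=\omega(|\log S|/S^2)$ in two dimensions) and the consequent relaxation of the $S^6$ threshold to $S^4$ are exactly the adjustments the paper's statement encodes.
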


\section{Conclusion}
\label{sec: conl} In this paper, we investigated the optimal delay-throughput trade-off of a mobile
ad-hoc network under the two-dimensional hybrid random walk,  one-dimensional i.i.d. mobility model
and one-dimensional hybrid random walk model. The optimal trade-offs have been established under
some conditions on delay $D.$ When these conditions are not met, the optimal trade-offs are still
unknown in general. We would like to comment that the key to establishing the optimal
delay-throughput trade-off is to obtain $P_{i,j}(D, L),$ the probability that node $i$ hits node
$j$ in one of $D$ consecutive time slots given a hitting distance $L.$ For example, under the
two-dimensional hybrid random walk model, the upper bound was obtained under the condition
$D=\omega(|\log S| /S^2)$ since it was the condition under which we established an upper bound on
$P_{i,j}(D, L)$ (inequality (\ref{eq: hp})). Further, the maximum throughput was shown to be
achievable under a more restrict condition $D=\omega((\log^2 n)|\log S| /S^6)$ since it was the
condition under which we established a lower bound on $P_{i,j}(D, L)$ (inequality (\ref{eq: p2})).
Thus, if we can find techniques to compute $P_{i,j}(D, L)$ without using the restricts on $D,$ then
the delay-throughput trade-offs can be characterized more generally. This is a topic for future
research.

\textbf{Acknowledgment:} We thank Sichao Yang for useful discussions during the course of this
work.

\section*{Appendix A: Probability Results}
The following lemmas are some basic probability results.
\begin{lemma}
Let $X_1,\ldots, X_n$ be independent $0-1$ random variables such that $\sum_i X_i=\mu.$ Then, the
following Chernoff bounds hold
\begin{eqnarray}
\hbox{Pr}\left(\sum_{i=1}^n X_i<(1-\delta)\mu\right)&\leq & e^{-\delta^2\mu/2};\label{eq: ChB_1}\\
\hbox{Pr}\left(\sum_{i=1}^n X_i>(1+\delta)\mu\right)& \leq &e^{-\delta^2\mu/3}.\label{eq: ChB_2}
\end{eqnarray}
\label{lem: C_B}
\end{lemma}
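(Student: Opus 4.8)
The plan is to use Chernoff's classical exponential-moment method and then reduce each of the two tail bounds to an elementary one-variable inequality. First I would fix notation: write $X=\sum_{i=1}^n X_i$, set $p_i=\Pr(X_i=1)$, and read the hypothesis as $E[X]=\sum_i p_i=\mu$ (the statement as written conflates the sum with its mean); assume $\mu>0$, since otherwise both bounds are trivial, and assume $0<\delta\le 1$, which is the range in which the stated constants $1/2$ and $1/3$ are correct. The engine for both parts is the same: for any $t>0$, Markov's inequality applied to $e^{\pm tX}$, together with independence, which lets one factor $E[e^{tX}]=\prod_i(1-p_i+p_ie^t)=\prod_i\bigl(1+p_i(e^t-1)\bigr)$, and the inequality $1+x\le e^x$, which gives $E[e^{tX}]\le e^{\mu(e^t-1)}$ (and likewise $E[e^{-tX}]\le e^{\mu(e^{-t}-1)}$).

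For the upper tail \eqref{eq: ChB_2}, this yields $\Pr\bigl(X>(1+\delta)\mu\bigr)\le e^{-t(1+\delta)\mu}E[e^{tX}]\le\bigl(e^{e^t-1}/e^{t(1+\delta)}\bigr)^{\mu}$; choosing the minimizer $t=\ln(1+\delta)$ gives the textbook bound $\Pr\bigl(X>(1+\delta)\mu\bigr)\le\bigl(e^{\delta}/(1+\delta)^{1+\delta}\bigr)^{\mu}$. To finish I would reduce \eqref{eq: ChB_2} to showing $h(\delta):=\delta-(1+\delta)\ln(1+\delta)+\delta^2/3\le 0$ on $(0,1]$, a routine calculus check: $h(0)=0$, $h'(\delta)=\tfrac23\delta-\ln(1+\delta)$ satisfies $h'(0)=0$, and since $h''=\tfrac23-\tfrac1{1+\delta}$ is negative on $(0,\tfrac12)$ and $h'(1)=\tfrac23-\ln2<0$, one gets $h'<0$ throughout $(0,1]$, hence $h$ is decreasing and $h(\delta)<0$.

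For the lower tail \eqref{eq: ChB_1} the argument is symmetric: for $t>0$, $\Pr\bigl(X<(1-\delta)\mu\bigr)\le e^{t(1-\delta)\mu}E[e^{-tX}]\le\bigl(e^{e^{-t}-1}/e^{-t(1-\delta)}\bigr)^{\mu}$, and the minimizing choice $t=-\ln(1-\delta)>0$ gives $\Pr\bigl(X<(1-\delta)\mu\bigr)\le\bigl(e^{-\delta}/(1-\delta)^{1-\delta}\bigr)^{\mu}$. It then suffices to show $-\delta-(1-\delta)\ln(1-\delta)\le-\delta^2/2$ for $\delta\in(0,1)$; this follows cleanly by expanding $-(1-\delta)\ln(1-\delta)=(1-\delta)\sum_{k\ge1}\delta^k/k=\delta-\sum_{k\ge2}\delta^k\bigl(\tfrac1{k-1}-\tfrac1k\bigr)=\delta-\delta^2/2-\delta^3/6-\cdots$, so the left-hand side equals $-\sum_{k\ge2}\delta^k/\bigl(k(k-1)\bigr)\le-\delta^2/2$ (in fact this one holds for all $\delta\in(0,1)$).

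There is no real obstacle: the content is entirely standard (Chernoff/Hoeffding). The only points requiring care are the correct reading of the hypothesis (so that $\mu$ is the mean), the restriction to $0<\delta\le1$ for the constant in \eqref{eq: ChB_2}, and pinning down the two scalar inequalities with the right constants $1/2$ and $1/3$ — the $1/3$ bound being the slightly more delicate of the two. Since the paper invokes this lemma only as a black box, I would present the two MGF reductions and state the elementary inequalities, omitting the one-variable derivative computations.
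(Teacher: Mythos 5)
Your proof is correct. The paper itself does not prove this lemma --- it simply cites Mitzenmacher and Upfal's textbook --- and what you have written is precisely the standard exponential-moment (Chernoff) argument that appears there: bound $E[e^{\pm tX}]$ using independence and $1+x\le e^x$ to get $E[e^{tX}]\le e^{\mu(e^t-1)}$, optimize $t$ to obtain $\bigl(e^{\delta}/(1+\delta)^{1+\delta}\bigr)^{\mu}$ and $\bigl(e^{-\delta}/(1-\delta)^{1-\delta}\bigr)^{\mu}$, then reduce each to a scalar inequality. Your two scalar reductions are both sound: the series expansion $-\delta-(1-\delta)\ln(1-\delta)=-\sum_{k\ge2}\delta^k/\bigl(k(k-1)\bigr)\le-\delta^2/2$ is clean and valid for all $\delta\in(0,1)$, and the calculus check that $h(\delta)=\delta-(1+\delta)\ln(1+\delta)+\delta^2/3\le0$ on $(0,1]$ is correct --- $h'(0)=0$, $h''<0$ on $(0,\tfrac12)$ so $h'$ dips negative, $h'$ then increases on $(\tfrac12,1]$ but $h'(1)=\tfrac23-\ln2<0$, so $h'<0$ throughout and $h$ decreases from $h(0)=0$.

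You are also right to flag two imprecisions in the statement as printed: the hypothesis ``$\sum_i X_i=\mu$'' must be read as $E\bigl[\sum_i X_i\bigr]=\mu$, and the upper-tail constant $1/3$ requires $0<\delta\le1$ (for larger $\delta$ one needs the weaker exponent $\delta^2/(2+\delta)$ or the multiplicative form). These caveats are implicit in the paper's use of the lemma --- every invocation there has $\delta$ a small constant --- but stating them explicitly is good practice. No gaps.
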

\begin{proof}
A detailed proof can be found in \cite{MitUpf_05}.
\end{proof}

\begin{lemma}
Assume we have $m$ bins. At each time, choose $h$ bins and drop one ball in each of them. Repeat
this $n$ times. Using $N_1$ to denote the number of bins containing at least one ball, the
following inequality holds for sufficiently large $n.$
\begin{eqnarray}
\hbox{Pr}\left(N_1\leq (1-\delta) m\tilde{p}_1 \right)&\leq& 2e^{-\delta^2
m\tilde{p}_1/3}.\label{eq: BB_2}
\end{eqnarray}
where $\tilde{p}_1=1-e^{-\frac{nh}{m}}.$ \label{lem: ball-bin_2}
\end{lemma}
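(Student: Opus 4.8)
The plan is to reduce the statement to the Chernoff lower tail of Lemma~\ref{lem: C_B} applied to the occupancy count. For $j=1,\dots,m$ let $X_j$ be the indicator that bin $j$ contains at least one ball after the $n$ rounds, so $N_1=\sum_{j=1}^m X_j$. A given bin is among the $h$ bins chosen in a single round with probability $h/m$, and the rounds are independent, so $\Pr(X_j=0)=(1-h/m)^n\le e^{-nh/m}$ by $1-x\le e^{-x}$; hence $\mathbb{E}[X_j]\ge 1-e^{-nh/m}=\tilde p_1$ and $\mu:=\mathbb{E}[N_1]\ge m\tilde p_1$. It therefore suffices to prove $\Pr(N_1\le(1-\delta)\mu)\le e^{-\delta^2\mu/2}$, since then, using $(1-\delta)m\tilde p_1\le(1-\delta)\mu$ and $\mu\ge m\tilde p_1$,
\[
\Pr\bigl(N_1\le(1-\delta)m\tilde p_1\bigr)\le \Pr\bigl(N_1\le(1-\delta)\mu\bigr)\le e^{-\delta^2\mu/2}\le e^{-\delta^2 m\tilde p_1/2}\le 2e^{-\delta^2 m\tilde p_1/3},
\]
which is the claim; the factor $2$ and the weaker exponent $\delta^2 m\tilde p_1/3$ are just slack.

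The only point needing care is that $X_1,\dots,X_m$ are not independent, so the proof of inequality~(\ref{eq: ChB_1}) does not literally apply. The plan is to invoke \emph{negative association}: within round $t$ the membership indicators $\{\mathbf 1[j\in S_t]\}_{j=1}^m$ of the random $h$-subset $S_t$ form a negatively associated family (sampling $h$ items without replacement from $m$ yields a negatively associated $0/1$ vector); the families for the $n$ rounds are independent, so their union is negatively associated; and $X_j=1-\prod_{t=1}^n(1-\mathbf 1[j\in S_t])$ is a coordinatewise nondecreasing function of $\{\mathbf 1[j\in S_t]\}_{t=1}^n$, the functions for distinct $j$ depending on disjoint blocks of the underlying variables. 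By closure of negative association under such monotone transformations, $\{X_j\}_{j=1}^m$ is negatively associated, hence $\mathbb{E}[e^{-\theta N_1}]\le\prod_j\mathbb{E}[e^{-\theta X_j}]$ for $\theta>0$, and the moment-generating-function computation behind (\ref{eq: ChB_1}) goes through verbatim, giving $\Pr(N_1\le(1-\delta)\mu)\le e^{-\delta^2\mu/2}$.

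An equivalent route, probably the one that motivates the constants $2$ and $3$, is Poissonization: replace the exact process by the model in which the bin loads are independent $\mathrm{Poisson}(nh/m)$ variables, so that $N_1$ is exactly $\mathrm{Binomial}(m,\tilde p_1)$ with $\tilde p_1=1-e^{-nh/m}$; Lemma~\ref{lem: C_B} then gives $\Pr(N_1\le(1-\delta)m\tilde p_1)\le e^{-\delta^2 m\tilde p_1/2}$, and the standard transfer bound for events monotone in the number of balls (see \cite{MitUpf_05}) costs a factor at most $2$. The ``for sufficiently large $n$'' in the statement is not actually needed for either argument. The main obstacle is precisely the dependence among the $X_j$: one must either be comfortable quoting negative association of occupancy indicators, or—if one reads the model as dropping $h$ \emph{distinct} bins per round rather than $nh$ i.i.d.\ balls—add a short monotone-coupling step showing that drawing a round's $h$ bins without replacement only increases the number of occupied bins relative to the i.i.d.\ model, so that the Poissonized bound still transfers. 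Everything else is a routine application of the already-stated Chernoff bound.
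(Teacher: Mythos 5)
The paper supplies no proof of this lemma---it simply cites \cite{YinYanSri_06}---so there is no in-text argument to compare yours against; I can only assess your proposal on its own terms. Your primary route via negative association is correct and self-contained: the round-$t$ membership indicators $\{\mathbf{1}[j\in S_t]\}_j$ of a uniform $h$-subset are indeed a negatively associated $0/1$ family (permutation-distribution case of Joag-Dev and Proschan), independent families are jointly NA, and applying coordinatewise nondecreasing maps on disjoint blocks preserves NA, so the occupancy indicators $X_j$ are NA. NA gives $E[e^{-\theta N_1}]\le\prod_j E[e^{-\theta X_j}]$ for $\theta>0$, the Chernoff lower-tail computation of Lemma \ref{lem: C_B} then goes through unchanged with $\mu=E[N_1]\ge m\tilde p_1$, and the monotonicity $\{N_1\le(1-\delta)m\tilde p_1\}\subseteq\{N_1\le(1-\delta)\mu\}$ closes the argument with room to spare over the stated $2e^{-\delta^2 m\tilde p_1/3}$. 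You are also right that, under this proof, the ``for sufficiently large $n$'' hedge in the lemma is not needed. The Poissonization alternative you sketch is viable but is genuinely incomplete as written: the standard Poisson-to-exact transfer in \cite{MitUpf_05} compares an $nh$-i.i.d.-balls process with the Poisson model, whereas this lemma throws $h$ \emph{distinct} balls per round, so one must first argue that occupancy in the distinct-per-round process stochastically dominates occupancy under $h$ i.i.d. throws per round. That step does go through (sample $h$ i.i.d. bins in a round, then extend the resulting distinct set to a uniform $h$-subset by adding fresh bins; by exchangeability the extension is a uniform $h$-subset and contains the i.i.d. occupancy set, and coupling rounds independently gives $N_1^{\rm distinct}\ge N_1^{\rm iid}$ pathwise), but you flagged it without carrying it out, and the NA route is cleaner and gives a tighter constant anyway, so treat NA as the actual proof and Poissonization as motivation for the constants.
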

\begin{proof}
Please refer to \cite{YinYanSri_06} for a detailed proof.
\end{proof}

\begin{lemma}
Suppose $n$ balls are independently dropped into $m$ bins and one trash can. After a ball is
dropped, the probability in the trash can is $1-p,$ and the probability in a specific bin is $p/m.$
Using $N_2$ to denote the number of bins containing at least $1$ ball, the following inequality
holds for sufficiently large $n.$
\begin{eqnarray}
\hbox{Pr}\left(N_2\leq (1-\delta) m \tilde{p}_2\right)&\leq& 2e^{-\delta^2 m
\tilde{p}_2/3};\label{eq: BB_1}
\end{eqnarray}
where $\tilde{p}_2=1-e^{-\frac{np}{m}}.$ \label{lem: ball-bin}
\end{lemma}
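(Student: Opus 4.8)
The plan is to write $N_2$ as a sum of occupancy indicators and apply the lower-tail Chernoff bound, the one subtlety being that those indicators are dependent. Set $X_j=\mathbf{1}\{\hbox{bin }j\hbox{ receives at least one ball}\}$, so that $N_2=\sum_{j=1}^m X_j$. Each of the $n$ balls lands in bin $j$ with probability $p/m$, independently of the other balls, so $\Pr(X_j=0)=(1-p/m)^n$, and hence, writing $\mu:=E[N_2]$,
\begin{eqnarray*}
\mu=m\left(1-\left(1-\frac{p}{m}\right)^{n}\right)\ \geq\ m\left(1-e^{-np/m}\right)=m\tilde{p}_2,
\end{eqnarray*}
using $(1-x)^{n}\leq e^{-nx}$.

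Next I would dispose of the dependence. Regard the trash can as an $(m{+}1)$-st bin; the occupancy vector produced by throwing $n$ i.i.d.\ balls into these $m+1$ bins is negatively associated, and restricting to the $m$ genuine bins preserves negative association. Since $X_j$ is a non-decreasing function of the load of bin $j$ and these loads sit on disjoint index sets, $(X_1,\dots,X_m)$ is negatively associated, so $E[\prod_j e^{-tX_j}]\leq\prod_j E[e^{-tX_j}]$ for every $t>0$. This is exactly the moment-generating-function estimate underlying inequality (\ref{eq: ChB_1}), so Lemma \ref{lem: C_B} applies to $N_2$ verbatim and yields
\begin{eqnarray*}
\Pr\left(N_2\leq(1-\delta)\mu\right)\leq e^{-\delta^{2}\mu/2}.
\end{eqnarray*}
An alternative that stays inside the paper's toolkit is to condition on the number $K\sim\mathrm{Bin}(n,p)$ of balls missing the trash can: given $K=k$ the configuration is precisely the $h=1$ case of Lemma \ref{lem: ball-bin_2}, and combining that with a Chernoff lower bound on $K$ and the monotonicity of $k\mapsto1-e^{-k/m}$ gives the same conclusion up to constants; the ``sufficiently large $n$'' hypothesis in the statement is the harmless residue of that route.

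To finish, note that $\mu\geq m\tilde{p}_2$ and $1-\delta>0$ give $(1-\delta)m\tilde{p}_2\leq(1-\delta)\mu$, so $\{N_2\leq(1-\delta)m\tilde{p}_2\}\subseteq\{N_2\leq(1-\delta)\mu\}$ and
\begin{eqnarray*}
\Pr\left(N_2\leq(1-\delta)m\tilde{p}_2\right)\ \leq\ e^{-\delta^{2}\mu/2}\ \leq\ e^{-\delta^{2}m\tilde{p}_2/2}\ \leq\ 2e^{-\delta^{2}m\tilde{p}_2/3},
\end{eqnarray*}
the last inequality because $e^{-a/6}\leq2$ for all $a\geq0$; this is in fact marginally stronger than (\ref{eq: BB_1}).

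The step I expect to be the crux is controlling the dependence among the $X_j$. A plain bounded-differences (McDiarmid/Azuma) argument, exposing the balls one at a time, only bounds additive deviations of order $\sqrt{n}$, which is too weak to produce a multiplicative tail bound once $m$ is much smaller than $n$; so one genuinely needs either the negative-association property of balls-into-bins or the reduction to the already-established Lemma \ref{lem: ball-bin_2}.
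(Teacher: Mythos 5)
The paper itself gives no proof of this lemma; it merely defers to an external reference (\cite{YinYanSri_06}), so a head-to-head comparison with the paper's argument is not possible. Taken on its own terms, your proof is correct. The expansion $N_2=\sum_j X_j$, the bound $\mu=m(1-(1-p/m)^n)\geq m\tilde p_2$, and the final chain $\Pr(N_2\leq(1-\delta)m\tilde p_2)\leq\Pr(N_2\leq(1-\delta)\mu)\leq e^{-\delta^2\mu/2}\leq e^{-\delta^2 m\tilde p_2/2}\leq 2e^{-\delta^2 m\tilde p_2/3}$ are all valid, and you correctly identify the crux: the $X_j$ are dependent, so Lemma~\ref{lem: C_B} cannot be cited as stated. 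Your fix via negative association of the multinomial occupancy vector is the standard clean route and is right: the loads of the $m+1$ bins (trash can included) are negatively associated, any subvector inherits the property, $X_j=\mathbf{1}\{L_j\geq 1\}$ are coordinatewise monotone functions on disjoint index sets, and NA gives $E[\prod_j e^{tX_j}]\leq\prod_j E[e^{tX_j}]$ for both signs of $t$, which is all the Chernoff derivation actually uses. One small wording caveat: you say Lemma~\ref{lem: C_B} ``applies verbatim,'' which is not literally true since its hypothesis is independence; what is true, and what you in fact establish, is that its proof goes through unchanged once the MGF factorization is supplied by NA. Your alternative route through Lemma~\ref{lem: ball-bin_2} with $h=1$ after conditioning on $K\sim\mathrm{Bin}(n,p)$ also works (using concavity to show $1-e^{-(1-\epsilon)np/m}\geq(1-\epsilon)\tilde p_2$ and a Chernoff lower bound on $K$), but costs constants and accounts for the ``sufficiently large $n$'' qualifier, as you note. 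Your closing remark that a ball-at-a-time bounded-differences bound is too weak when $m\tilde p_2=o(\sqrt n)$ is an accurate diagnosis of why something like NA or the conditioning reduction is genuinely needed.
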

\begin{proof}
Please refer to \cite{YinYanSri_06} for a detailed proof.
\end{proof}

Next we introduce the Azuma-Hoeffding inequality.
\begin{lemma}
Suppose that $X_0,\ldots, X_n$ are independent random variables, and there exists a constant $c>0$
such that $f(\mathbf{X})=f(X_1, \ldots, X_n)$
satisfies the following condition for any $i$ and any set of values $x_1, \ldots, x_n$ and $y_i:$   \begin{eqnarray*} \left|f(x_1, \ldots, x_{i-1}, x_i, x_{i+1}, \ldots, x_n)-\right.&\\
\left.f(x_1, \ldots, x_{i-1}, y_i, x_{i+1}, \ldots, x_n)\right|&\leq c.
\end{eqnarray*}
Then we have
$$\Pr\left(\left|f(\mathbf{X})-E[f(\mathbf{X})]\right|\geq \delta\right)\leq 2 e^{-\frac{2\delta^2}{nc^2}}.$$
\label{lem: AH}
\end{lemma}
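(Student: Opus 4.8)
The plan is to prove this bounded-differences inequality (McDiarmid's inequality, stated here in Azuma--Hoeffding form) by the standard Doob-martingale argument. First I would build the Doob martingale of $f$: put $Z_0 = E[f(\mathbf{X})]$ and, for $i=1,\dots,n$, $Z_i = E[f(\mathbf{X}) \mid X_1,\dots,X_i]$, so that $Z_n = f(\mathbf{X})$ almost surely and $(Z_i)$ is a martingale for the filtration generated by $X_1,\dots,X_i$. Then $f(\mathbf{X}) - E[f(\mathbf{X})] = \sum_{i=1}^n \Delta_i$ with $\Delta_i := Z_i - Z_{i-1}$, and it suffices to bound the upper tail of this telescoping sum: the lower tail follows by applying the result to $-f$, which obeys the same hypothesis with the same constant $c$, and a union bound over the two tails then produces the factor $2$.

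The first --- and crucial --- step is to show that, conditionally on $X_1,\dots,X_{i-1}$, the increment $\Delta_i$ is confined to an interval of width at most $c$. This is exactly where independence of the $X_i$ is used: since $X_{i+1},\dots,X_n$ are independent of $(X_1,\dots,X_i)$, I can write $Z_i = g_i(X_1,\dots,X_i)$ where $g_i(x_1,\dots,x_i) = E[f(x_1,\dots,x_i,X_{i+1},\dots,X_n)]$ averages only the last $n-i$ coordinates. Applying the bounded-differences hypothesis inside this expectation gives $|g_i(x_1,\dots,x_{i-1},y) - g_i(x_1,\dots,x_{i-1},y')| \le c$ for all $y,y'$. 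Hence, setting $U_i = \sup_y g_i(X_1,\dots,X_{i-1},y) - Z_{i-1}$ and $L_i = \inf_y g_i(X_1,\dots,X_{i-1},y) - Z_{i-1}$, we have $L_i \le \Delta_i \le U_i$, $U_i - L_i \le c$, and $E[\Delta_i \mid X_1,\dots,X_{i-1}] = 0$.

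Next I would invoke Hoeffding's lemma: if $Y$ has $E[Y]=0$ and $a \le Y \le b$ almost surely, then $E[e^{sY}] \le e^{s^2(b-a)^2/8}$ for all real $s$. I would prove it by bounding $e^{sy}$ on $[a,b]$ by the chord joining $(a,e^{sa})$ and $(b,e^{sb})$, taking expectations, and showing the logarithm $\psi(s)$ of the resulting bound satisfies $\psi(0)=\psi'(0)=0$ and $\psi''(s)\le (b-a)^2/4$, so $\psi(s)\le s^2(b-a)^2/8$ by Taylor's theorem. Applied conditionally with $Y=\Delta_i$, $[a,b]=[L_i,U_i]$ and $b-a\le c$, this yields $E[e^{s\Delta_i}\mid X_1,\dots,X_{i-1}]\le e^{s^2c^2/8}$.

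Finally I would iterate the tower property: $E\big[e^{s\sum_{i=1}^n \Delta_i}\big] = E\big[e^{s\sum_{i=1}^{n-1}\Delta_i}\, E[e^{s\Delta_n}\mid X_1,\dots,X_{n-1}]\big] \le e^{s^2c^2/8}\,E\big[e^{s\sum_{i=1}^{n-1}\Delta_i}\big]$, and peeling off one factor at a time gives $E\big[e^{s\sum_{i=1}^n\Delta_i}\big]\le e^{ns^2c^2/8}$. A Chernoff bound, $\Pr(\sum_i\Delta_i \ge \delta)\le e^{-s\delta}E[e^{s\sum_i\Delta_i}] \le e^{-s\delta + ns^2c^2/8}$, optimized at $s = 4\delta/(nc^2)$, gives $e^{-2\delta^2/(nc^2)}$; combining with the symmetric lower-tail bound gives the claimed $2e^{-2\delta^2/(nc^2)}$. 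I expect the main obstacle to be the second paragraph --- carefully translating the bounded-differences condition into the width bound $U_i - L_i \le c$ on the conditional support of $\Delta_i$, since this is precisely where the independence hypothesis is essential; Hoeffding's lemma and the tower-property recursion are then routine.
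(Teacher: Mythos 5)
The paper offers no proof of this lemma; it merely cites Mitzenmacher and Upfal's textbook. Your argument is the standard Doob-martingale proof of the bounded-differences (McDiarmid) inequality --- precisely what that reference develops --- and it is correct throughout: the reduction of the conditional increment's range to width $c$ using independence, Hoeffding's lemma for the conditional moment generating function, the tower-property peeling, and the Chernoff optimization at $s=4\delta/(nc^2)$ giving the claimed $2e^{-2\delta^2/(nc^2)}$.
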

\begin{proof}
A detailed proof can be found in \cite{MitUpf_05}.
\end{proof}

\section*{Appendix B: Properties of Random Walk}
Consider following two random walks.
\begin{enumerate}
\item[(1)] One dimensional random walk: A random walk on a circle with unit length and
$1/\tilde{S}$ points. At each time slot, a node moves to one point left, one point right or doesn't
move with equal probability as in Figure \ref{fig: 1RW}.
\begin{figure}[hbt]
\centering{\epsfig{file=./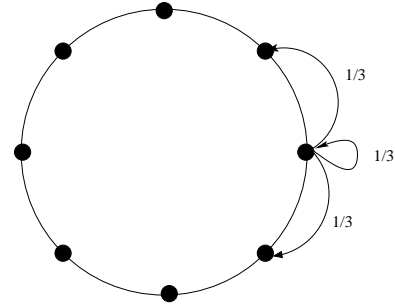,width=2.0in}} \caption{One Dimensional Random Walk} \label{fig:
1RW}
\end{figure}

\item[(2)] Two dimensional random walk: A random walk on a unit torus with $1/\tilde{S}^2$ points.
At each time slot, a node moves to one of eight neighbors or doesn't move with equal probability as
in Figure \ref{fig: 2RW}.
\end{enumerate}
\begin{figure}[hbt]
\centering{\epsfig{file=./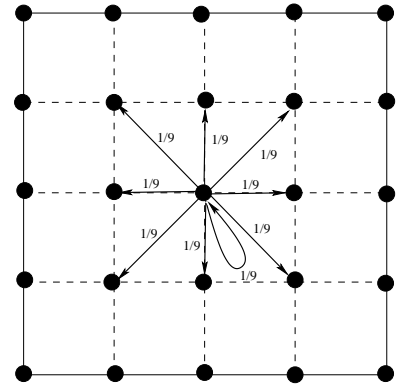,width=2.0in}} \caption{Two Dimensional Random Walk} \label{fig:
2RW}
\end{figure}
We introduce following definitions.
\begin{itemize}
\item Transition matrix $\mathbf{P}:$ $\mathbf{P}=\left[P_{\mathbf{i},\mathbf{j}}\right]$ where
$P_{\mathbf{i},\mathbf{j}}$ is the probability of moving from point $\mathbf{i}$ to point
$\mathbf{j}.$

\item Stationary distribution $\Pi:$ A vector which satisfies the equation $\Pi \mathbf{P}=\Pi.$

\item Hitting time $T_h(\mathbf{i}, \mathbf{j}):$ Time taken for a node to move from point
$\mathbf{i}$ to point $\mathbf{j}.$

\item Mixing time $T_m:$ $$T_m=\inf_t \sup_{\mathbf{i}}\sum_{\mathbf{j}}
\left|P^t_{\mathbf{i}\mathbf{j}}-\Pi_{\mathbf{j}}\right|\leq {\tilde{S}}^4,$$ where
$P^t_{\mathbf{i}\mathbf{j}}$ is the $(\mathbf{i}, \mathbf{j})^{\rm th}$ entry of $\mathbf{P}^t.$
\end{itemize}

\begin{lemma}
For the one dimensional random walk, we have
\begin{itemize}
\item $E[T_h(\mathbf{i}, \mathbf{j})]=O(1/\tilde{S}^2).$

\item $T_m=O(|\log \tilde{S}|/\tilde{S}^2).$
\end{itemize}

For the two dimensional random walk, we have
\begin{itemize}
\item $E[T_h(\mathbf{i}, \mathbf{j})]=O(|\log \tilde{S}|/\tilde{S}^2).$

\item $T_m=O(|\log \tilde{S}|/\tilde{S}^2).$
\end{itemize}
\label{lem: PRW}
\end{lemma}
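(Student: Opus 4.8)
The proof splits into a mixing-time part, which is a direct spectral computation, and a hitting-time part, which rests on classical random-walk estimates. Throughout I would write $N=1/\tilde S$, so that the one dimensional walk is the lazy simple random walk on the cycle $\mathbb{Z}_N$ with holding probability $1/3$, while the two dimensional walk (one of eight neighbours or staying, each with probability $1/9$) is exactly the product of two independent copies of that one dimensional walk, one acting on each coordinate, since the coordinate increments are independent and uniform on $\{-1,0,1\}$. Both chains are reversible with uniform stationary distribution $\Pi$, and their transition matrices are circulant (respectively block-circulant), hence diagonalised by the discrete Fourier modes.

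For the mixing time I would compute the eigenvalues of the one dimensional operator, namely $\lambda_k=\tfrac13+\tfrac23\cos(2\pi k/N)$ for $k=0,\dots,N-1$. The holding probability is essential here: it forces $\lambda_k\ge-\tfrac13$ for every $k$, so the most negative eigenvalue (near $k=N/2$) is harmless and $\max_{k\ne0}|\lambda_k|=\lambda_1=1-\Theta(1/N^2)$. By Parseval applied to the row $P^t_{\mathbf i\cdot}-\Pi$, one has $\sum_{\mathbf j}\bigl(P^t_{\mathbf i\mathbf j}-\Pi_{\mathbf j}\bigr)^2/\Pi_{\mathbf j}=\sum_{k\ne0}\lambda_k^{2t}\le N\lambda_1^{2t}$, whence by Cauchy--Schwarz
\[
\sum_{\mathbf j}\bigl|P^t_{\mathbf i\mathbf j}-\Pi_{\mathbf j}\bigr|\ \le\ \sqrt{N}\,\lambda_1^{\,t}\ \le\ \sqrt{N}\,e^{-\Theta(t/N^2)},
\]
which drops below $\tilde S^4=N^{-4}$ as soon as $t=\Theta(N^2\log N)=O(|\log\tilde S|/\tilde S^2)$. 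For the two dimensional walk the eigenvalues are the products $\lambda_{k_1}\lambda_{k_2}$, so $\max_{(k_1,k_2)\ne(0,0)}|\lambda_{k_1}\lambda_{k_2}|=\lambda_1=1-\Theta(1/N^2)$, and with $N^2$ states the same two lines give $\sum_{\mathbf j}|P^t_{\mathbf i\mathbf j}-\Pi_{\mathbf j}|\le N\,e^{-\Theta(t/N^2)}$, again below $N^{-4}$ once $t=O(N^2\log N)=O(|\log\tilde S|/\tilde S^2)$; equivalently one can bound the two dimensional total variation by the sum of the two one dimensional ones.

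For the hitting times, the one dimensional bound is a gambler's-ruin computation: cutting the cycle at the target turns the problem into an absorbing walk on a path of length $N$, whose expected absorption time is $\Theta(N^2)$, and the holding probability only inflates this by the factor $3/2$; this also follows from \cite{Lov_96}. The two dimensional bound $E[T_h(\mathbf i,\mathbf j)]=O(|\log\tilde S|/\tilde S^2)$ is the genuinely delicate point: it expresses that the walk on the torus $\mathbb{Z}_N^2$ is only barely recurrent, so that the Green's function of the walk killed at $\mathbf j$ grows like $\log N$, giving an expected hitting time of order $N^2\log N$ uniformly in $\mathbf i,\mathbf j$. Establishing that logarithmic Green's-function estimate (or invoking it from the random-walk literature, e.g. \cite{Lov_96}) is the main obstacle; every other ingredient above is a routine Fourier or electrical-network calculation.
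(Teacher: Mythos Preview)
Your proposal is correct. The paper's own proof of this lemma is essentially a pointer to the literature: it refers to Lov\'asz's survey \cite{Lov_96} for the one-dimensional walk, to Lemma~13 of \cite{GamMamParSha_06} for the two-dimensional hitting time, and handles the two-dimensional mixing time by the same product-chain observation you make (the nine-move walk factors as two independent three-move coordinate walks). Your argument recovers the same conclusions but is considerably more self-contained: you actually carry out the circulant/Fourier spectral computation for the mixing time rather than citing it, and you sketch the gambler's-ruin reduction for the one-dimensional hitting time. The one place where you, like the paper, must ultimately lean on an external reference is the $O(N^2\log N)$ two-dimensional hitting time; your Green's-function remark correctly identifies why the logarithm appears, and the paper simply defers to \cite{GamMamParSha_06} for the same fact. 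In short, your route and the paper's coincide on the product-chain reduction, but you supply explicit spectral and absorption-time calculations where the paper is content to cite.
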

\begin{proof}
Please refer to \cite{Lov_96} for the one dimensional random walk. The proof of the hitting time of
the two dimensional random walk is presented Lemma 13 in \cite{GamMamParSha_06}, and the mixing
time result holds since the two dimensional random walk can be regarded as two independent one
dimensional random walks.
\end{proof}

\begin{lemma}
Let $N_{\mathbf{i}, \mathbf{j}, \mathbf{k}}[D]$  denote the number of visits to point $\mathbf{j}$
in $D$ time slots starting from point $\mathbf{i}$ and ending at point $\mathbf{k}.$ If
$D=\omega(|\log \hat{S}|/\hat{S}),$ we have \begin{eqnarray}\frac{9}{10}D\hat{S}\leq
E[N_{\mathbf{i}, \mathbf{j}, \mathbf{k}}[D]]\leq \frac{11}{10}D\hat{S}\label{eq: RW_mean}
\end{eqnarray} where $\hat{S}=\tilde{S}$ for the one dimensional random walk and
$\hat{S}=\tilde{S}^2$ for the two dimensional random walk. Furthermore, if $D=\kappa\alpha|\log
\hat{S}|/\hat{S}^3$ where $\kappa=\omega(1),$ then we have
\begin{eqnarray}
\Pr\left(\frac{6}{5}D\hat{S}\geq N_{\mathbf{i}, \mathbf{j}, \mathbf{k}}[D]]\geq
\frac{4}{5}D\hat{S}\right)\geq 1- 2e^{-\frac{\alpha}{625}}. \label{eq: exact}
\end{eqnarray}
\label{lem: RWM}
\end{lemma}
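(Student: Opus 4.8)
The plan is to evaluate the conditional expectation in closed form via the Markov property, and then obtain the concentration estimate by chopping the $D$ steps into (conditionally) independent blocks and applying the Azuma--Hoeffding inequality (Lemma \ref{lem: AH}).

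\textbf{Expected number of visits.} Both walks are symmetric, so the stationary distribution $\Pi$ is uniform with $\Pi_{\mathbf b}=\hat S$ at every point. By the Markov property, for $1\le t\le D$,
\[
\Pr\!\left(X_t=\mathbf j\mid X_0=\mathbf i,\,X_D=\mathbf k\right)=\frac{P^t_{\mathbf i,\mathbf j}\,P^{D-t}_{\mathbf j,\mathbf k}}{P^D_{\mathbf i,\mathbf k}},
\]
so $E\!\left[N_{\mathbf i,\mathbf j,\mathbf k}[D]\right]=\sum_{t=1}^{D}P^t_{\mathbf i,\mathbf j}P^{D-t}_{\mathbf j,\mathbf k}/P^D_{\mathbf i,\mathbf k}$. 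Let $T_m$ be the mixing time of the corresponding walk (Lemma \ref{lem: PRW}); by its definition $|P^s_{\mathbf a,\mathbf b}-\hat S|\le\hat S^{4}$ for all $s\ge T_m$. Split the sum into the bulk $T_m\le t\le D-T_m$ and the two boundary ranges (together at most $2T_m$ indices). On the bulk, all three of $P^t_{\mathbf i,\mathbf j}$, $P^{D-t}_{\mathbf j,\mathbf k}$, $P^D_{\mathbf i,\mathbf k}$ lie in $\hat S(1\pm\hat S^{3})$, so each term is $\hat S(1+O(\hat S^{3}))$ and the bulk contributes $(D-2T_m)\hat S(1+O(\hat S^{3}))$. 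On the boundary, bound $P^{D-t}_{\mathbf j,\mathbf k}/P^D_{\mathbf i,\mathbf k}\le 2$ and use $\sum_{t\le T_m}\max_{\mathbf b}P^t_{\mathbf a,\mathbf b}=O(\hat S\,T_m)$, so the boundary contributes $O(\hat S\,T_m)$. The hypothesis $D=\omega(|\log\hat S|/\hat S)$ is exactly what makes $T_m=o(D)$ and $\hat S\,T_m=o(D\hat S)$, so the two pieces combine to $E[N_{\mathbf i,\mathbf j,\mathbf k}[D]]=D\hat S(1+o(1))\in[\tfrac{9}{10}D\hat S,\tfrac{11}{10}D\hat S]$ for all large $n$, uniformly in $\mathbf i,\mathbf j,\mathbf k$. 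This is (\ref{eq: RW_mean}).

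\textbf{Concentration.} Now take $D=\kappa\alpha|\log\hat S|/\hat S^{3}$ with $\kappa=\omega(1)$. Partition $\{1,\dots,D\}$ into $m$ blocks of length $\ell=\lceil\sqrt\kappa\rceil\,T_m$ (so $\ell=\omega(|\log\hat S|/\hat S)$ and $m\asymp D/\ell$), and condition on the block-boundary positions $\mathbf Y_r=X_{r\ell}$ with $\mathbf Y_0=\mathbf i$, $\mathbf Y_m=\mathbf k$. Given $(\mathbf Y_0,\dots,\mathbf Y_m)$, the in-block trajectories $B_1,\dots,B_m$ are mutually independent, $B_r$ being a length-$\ell$ bridge from $\mathbf Y_{r-1}$ to $\mathbf Y_r$; writing $N_{\mathbf i,\mathbf j,\mathbf k}[D]=f(B_1,\dots,B_m)$, a single-block change alters $f$ by at most $\ell$, and its conditional mean $\sum_{r}E[N_{\mathbf Y_{r-1},\mathbf j,\mathbf Y_r}[\ell]]$ lies in $[\tfrac{9}{10}D\hat S,\tfrac{11}{10}D\hat S]$ for \emph{every} choice of the $\mathbf Y_r$'s, because (\ref{eq: RW_mean}) is uniform and $\ell=\omega(|\log\hat S|/\hat S)$. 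Lemma \ref{lem: AH} with $c=\ell$ and $\delta=\tfrac{1}{10}D\hat S$ then gives, conditionally and hence unconditionally,
\[
\Pr\!\left(\left|N_{\mathbf i,\mathbf j,\mathbf k}[D]-E[f\mid\mathbf Y_0,\dots,\mathbf Y_m]\right|\ge\tfrac{1}{10}D\hat S\right)\le 2\exp\!\left(-\frac{D\hat S^{2}}{50\,\ell}\right).
\]
For the two-dimensional walk, $T_m=\Theta(|\log\hat S|/\hat S)$ makes $D\hat S^{2}/\ell=\Theta(\sqrt\kappa\,\alpha)$, which exceeds $\alpha/625$ for $n$ large since $\kappa=\omega(1)$; the one-dimensional walk is handled identically with its own $T_m$ from Lemma \ref{lem: PRW}. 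Combined with the window $[\tfrac{9}{10}D\hat S,\tfrac{11}{10}D\hat S]$ for the conditional mean, this yields $N_{\mathbf i,\mathbf j,\mathbf k}[D]\in[\tfrac{4}{5}D\hat S,\tfrac{6}{5}D\hat S]$ with probability at least $1-2e^{-\alpha/625}$, i.e.\ (\ref{eq: exact}).

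\textbf{Main obstacle.} The delicate point is the first and last $\sim T_m$ steps of the walk: before it has mixed, $P^t_{\mathbf i,\mathbf j}$ can greatly exceed $\hat S$, and one must check that their aggregate contribution $O(\hat S\,T_m)$ is genuinely negligible against the main term $D\hat S$ — it is this comparison that dictates both the hypothesis on $D$ and the precise constants $9/10$, $11/10$ (and, in the concentration step, the need to take the block length a growing multiple of $T_m$ so the Azuma exponent scales like $\alpha$). Everything else is bookkeeping with the mixing bound of Lemma \ref{lem: PRW} and a routine blocking-plus-Azuma argument.
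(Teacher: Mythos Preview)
Your concentration argument is essentially the paper's: condition on the block-boundary positions, use the conditional independence of the in-block bridges, and apply Azuma--Hoeffding (Lemma~\ref{lem: AH}). The paper takes block length $D\hat S^{2}/\alpha$ (so exactly $\alpha/\hat S^{2}$ blocks) and deviation $\tfrac{9}{100}D\hat S$, while you take block length $\lceil\sqrt{\kappa}\rceil\,T_m$; both choices yield an exponent of order at least $\alpha$ once $\kappa=\omega(1)$, so the conclusion~(\ref{eq: exact}) follows either way.

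For the mean estimate you take a genuinely different route. The paper argues via a renewal identity, decomposing $D$ into a first hitting time to $\mathbf j$, then $N-1$ returns to $\mathbf j$, then a hitting time to $\mathbf k$, and solving for $E[N]$ using Wald together with $E[T_h(\mathbf j,\mathbf j)]=1/\hat S$ and the bound on $E[T_h(\cdot,\cdot)]$ from Lemma~\ref{lem: PRW}. You instead evaluate the bridge marginal $\Pr(X_t=\mathbf j\mid X_0,X_D)$ directly and control the transition kernel through the mixing bound. Your route has the advantage of handling the endpoint conditioning cleanly (the paper's renewal equation is heuristic for a bridge), but it leans on the estimate $\sum_{t\le T_m}\max_{\mathbf b}P^{t}_{\mathbf a,\mathbf b}=O(\hat S\,T_m)$, which you assert without proof; in two dimensions this is the Green's-function fact $\sum_{t\le T}P^{t}_{\mathbf a,\mathbf a}=O(\log T)$ and deserves a line of justification. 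One caution: your remark that ``the one-dimensional walk is handled identically'' is too quick, since there $T_m=\Theta(|\log\hat S|/\hat S^{2})$ and the stated hypothesis $D=\omega(|\log\hat S|/\hat S)$ no longer forces $T_m=o(D)$; the paper's renewal argument has exactly the same difficulty in that case, so this is a looseness in the lemma's stated hypothesis rather than a flaw specific to your approach.
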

\begin{proof}
First we have
$$T_h(\mathbf{i}, \mathbf{j})+\sum_{l=1}^{N_{\mathbf{i}, \mathbf{j}, \mathbf{k}}[D]-1}T^l_h(\mathbf{j},
\mathbf{j})+T_h(\mathbf{j}, \mathbf{k})=D,$$ where $T^l_h(\mathbf{j}, \mathbf{j})$ is the time
duration between $l^{\rm th}$ visits to point $\mathbf{j}$ and $(l+1)^{\rm th}$ visits to point
$\mathbf{j}.$ Taking the expectation on both sides, we have $$E[T_h(\mathbf{i},
\mathbf{j})]+E[N_{\mathbf{i}, \mathbf{j}, \mathbf{k}}[D]]E[T_h(\mathbf{j},
\mathbf{j})]-E[T_h(\mathbf{j}, \mathbf{j})]+E[T_h(\mathbf{j}, \mathbf{k})]=D,$$ which implies
$$E[N_{\mathbf{i}, \mathbf{j}, \mathbf{k}}[D]]= \frac{D-E[T_h(\mathbf{i},
\mathbf{j})]-E[T_h(\mathbf{j}, \mathbf{k})]+E[T_h(\mathbf{j}, \mathbf{j})]}{E[T_h(\mathbf{j},
\mathbf{j})]}.$$ Inequality (\ref{eq: RW_mean}) follows from the facts that $E[T_h(\mathbf{i},
\mathbf{j})]=O(|\log \hat{S}|/\hat{S})$ and $E[T_h(\mathbf{j}, \mathbf{j})]=1/\hat{S}.$

Next let $\mathbf{x}[t]$ denote the position of the node at time slot $t,$ $\tilde{\mathbf{X}}$
denote $\{\mathbf{x}[t]\}$ for $t=1, \frac{D{\hat{S}}^2}{\alpha}+1, \frac{2D{\hat{S}}^2}{\alpha}+1,
\ldots, D-\frac{D{\hat{S}}^2}{\alpha}, D,$ and $\mathbf{X}_m$ denote $\{\mathbf{x}[t]\}$ for
$t=\frac{mD{\hat{S}}^2}{\alpha}+1, \frac{mD{\hat{S}}^2}{\alpha}, \ldots, \min\left\{D,
\frac{(m+1)D{\hat{S}}^2}{\alpha}\right\}$ where $m=0, \ldots, \alpha/{\hat{S}}^2-1.$ Further, let
$N_{\mathbf{j} ,\tilde{\mathbf{X}}}[D]$ denote the number of visits to point $\mathbf{j}$ given
$\tilde{\mathbf{X}}.$ It is easy to see that for any $\tilde{\mathbf{X}}$ there exists a function
$f_{\tilde{\mathbf{X}}}$ such that
$$N_{\mathbf{j}, \tilde{\mathbf{X}}}[D]=f_{\tilde{\mathbf{X}}}\left(\mathbf{X}_1,\ldots,
\mathbf{X}_{\alpha/{\hat{S}}^2-1}\right),$$ where $\{\mathbf{X}_m\}$ are mutually independent given
$\tilde{\mathbf{X}}.$ Note that $\mathbf{X}_m$ contains the position information from time slot
$mD{\hat{S}}^2/\alpha+1$ to time slot $(m+1)D{\hat{S}}^2/\alpha,$ so
\begin{eqnarray*}\left|f_{\tilde{\mathbf{X}}}\left(\mathbf{X}_0, \ldots, \mathbf{X}_{m-1}, \mathbf{X}_m, \mathbf{X}_{m+1},\ldots, \mathbf{X}_{\alpha/{\hat{S}}^2-1}\right)
-\right.&\\\left. f_{\tilde{\mathbf{X}}}\left(\mathbf{X}_0, \ldots, \mathbf{X}_{m-1}, \mathbf{Y}_m,
\mathbf{X}_{m+1},\ldots, \mathbf{X}_{\alpha/{\hat{S}}^2 -1}\right)\right|&\leq
\frac{D{\hat{S}}^2}{\alpha}.\end{eqnarray*}

Next note that $D{\hat{S}}^2/{\alpha}=\omega(|\log{\hat{S}}|/\hat{S}),$ so from inequality
(\ref{eq: RW_mean}), we can conclude that for any $\mathbf{i},$ $\mathbf{j}$ and $\mathbf{k},$
$$\frac{9}{10}\frac{D{\hat{S}}^3}{\alpha}\leq E\left[N_{\mathbf{i}, \mathbf{j}, \mathbf{k}}\left[\frac{D{\hat{S}}^2}{\alpha}\right]\right]\leq
\frac{11}{10}\frac{D{\hat{S}}^3}{\alpha},$$ which implies that
$$\frac{9}{10}D{\hat{S}}\leq E\left[N_{\mathbf{j}, \tilde{\mathbf{\mathbf{X}}}}\left[D\right]\right]\leq
\frac{11}{10}D{\hat{S}}$$ holds for any $\tilde{\mathbf{X}}.$ Then from the Azuma-Hoeffding
inequality (Lemma \ref{lem: AH}), we have that
\begin{eqnarray*}
\Pr\left(\left|N_{\mathbf{j}, \tilde{\mathbf{X}}}[D]- E[N_{\mathbf{j},
\tilde{\mathbf{X}}}[D]]\right|\leq \frac{9}{100}D{\hat{S}}\right)\geq 1-2e^{-\frac{\alpha}{625}},
\end{eqnarray*} for any $\tilde{\mathbf{X}},$ and inequality (\ref{eq: exact}) holds.
\end{proof}

\end{document}